\newtheorem{subthm}[subsubsection]{Theorem}
\newtheorem{prop}[subsection]{Proposition}
\newtheorem{subcoro}[subsubsection]{Corollary}
\newtheorem{subprop}[subsubsection]{Proposition}
\newcommand{\R}{{\mathbb{R}}}
\newtheorem{lem}[subsubsection]{Lemma}
\newtheorem{examples}[subsection]{Example}
\title{
A Constrained Evolutionary Gaussian Multiple Access Channel Game}
\author{Quanyan Zhu, Hamidou Tembine, Tamer Ba\c{s}ar}
\begin{document}

\maketitle
\thispagestyle{empty}
\pagestyle{empty}

\begin{abstract}
In this paper, we formulate an evolutionary multiple access channel game with continuous-variable actions and coupled rate constraints. We characterize Nash equilibria of the game and show that  the pure Nash equilibria are Pareto optimal and  also resilient to deviations by coalitions of any size, i.e., they are strong equilibria. We use  the concepts of  price of anarchy and strong price of anarchy to  study the performance of the system. The paper also addresses how to select one specific equilibrium solution using the concepts of normalized equilibrium and evolutionary stable strategies. We examine the long-run behavior of these strategies under  several classes of evolutionary game dynamics such as Brown-von Neumann-Nash dynamics, and replicator dynamics.\footnote{Q. Zhu and T. Ba\c{s}ar are with the Department of Electrical and Computer Engineering and the Coordinated Science Laboratory,
University of Illinois at Urbana-Champaign. Postal Address: 1308 West Main, Urbana, IL, 61801, USA.
E-mail:\{zhu31,tbasar\}@decision.csl.uiuc.edu;
H. Tembine is with LIA/CERI, University of Avignon, France. E-mail: hamidou.tembine@univ-avignon.fr} \footnote{This work was done when the second coauthor was visiting University of Illinois at Urbana Champaign. This work was partially supported by an INRIA PhD intership grant.}
\end{abstract}

\section{Introduction}

Recently, there has been much interest in understanding the behavior
of multiple access channels under constraints. Considerable amount of work has been carried out on the problem of how users can obtain an acceptable throughput by choosing  rates independently.
 Motivated by the interest in studying a large population of users playing the game over time, evolutionary game
theory was found to be an appropriate framework for communication networks. It has been applied to  problems such as power control in wireless networks  and mobile interference control \cite{networking}.
 In \cite{pairwise}, an additive white Gaussian noise (AWGN) multiple access channel problem was modeled
 as a noncooperative game with pairwise interactions,
in which users were modeled as rational entities whose only
interest was to maximize their own communication rates. The authors obtained the  Nash equilibria of the two-user
game and introduced a two-player evolutionary game model with {\it pairwise interactions} based on replicator dynamics. However,  the case when interactions   are not pairwise arises frequently  in communication networks, such the  Code Division Multiple Access (CDMA) or the Orthogonal Frequency-Division Multiple Access (OFDMA) in Worldwide Interoperability for Microwave Access (WiMAX) environment \cite{networking}.

In this  work, we extend the study of \cite{pairwise} to wireless communication systems with an arbitrary number of users corresponding to each receiver. We formulate a static non-cooperative game  with $m$ users subject to rate capacity constraints  and  extend the constrained game to a dynamic evolutionary game with a large number of users whose strategies evolve  over time.
Different from evolutionary games with discrete and finite number of actions, our model is based on a class of  continuous games, known as  {\it continuous-trait games}.  Evolutionary games with continuum action spaces can be seen in a wide variety of applications
in evolutionary ecology, such as evolution of phenology, germination, nutrient
foraging in plants, and predator-prey foraging \cite{continuous,vincent05}.

\subsection{Contributions}
The main contributions of this work can be summarized as follows.
We show that the static continuous kernel rate allocation game with coupled rate constraints has a convex set of pure Nash equilibria, coinciding with the maximal face of the polyhedral capacity region.
All the pure equilibria are Pareto optimal and are also strong equilibria, resilient to simultaneous deviation by coalition of any size.
 We show that the pure Nash equilibria in the rate allocation problem are 100\% efficient in terms of   Price of Anarchy (PoA) and constrained Strong Price of Anarchy (CSPoA).
We study the stability of strong equilibria, normalized equilibria, and evolutionary stable strategies (ESS) using evolutionary game dynamics such as  Brown-von Neumann-Nash dynamics, generalized Smith dynamics, and replicator dynamics.

\subsection{Organization of the paper}
The rest of the paper is structured as follows. We present in the next section the evolutionary game model of rate allocation in additive white Gaussian multiple access wireless networks, and analyze its equilibria and Pareto optimality. In Section~\ref{secrobust}, we present strong equilibria and price of anarchy of the game. In Section \ref{secselection}, we discuss how to select one specific equilibrium such as normalized equilibrium and evolutionary stable strategies. Section~\ref{secdynamics} studies the stability of equilibria and evolution of strategies using game dynamics. Section~\ref{secconclud} concludes the paper.

\section{The Game Model } \label{secmodel}

We consider a communication system consisting of several receivers and several senders (See Figure \ref{figfuncttt3}). At each time, there are many local interactions (typically, at each receiver there is a local interaction) at the same time. Each local interaction will correspond to a non-cooperative  one-shot game with common constraints. The opponents do not necessarily stay the same from a given time slot to another time slot. Users revise their rates in view of their payoffs and the coupled constraints (for example by using an evolutionary process, a learning process or a trial-and-error updating process). The game evolves  in time.
Users are  interested in maximizing a fitness function based on
their own communication rates at each time, and  they are aware of the fact that
the other users have the same goal.
The coupled power and rate constraints are
also common knowledge. Users have to choose independently
their own coding rates at the beginning of the communication,
where the rates selected by a user may be either deterministic,
or chosen from some distribution. If the rate profile arrived at as a result of these independent decisions lies
in the capacity region, users will communicate at that operating
point. Otherwise, either the receiver is unable to decode any
signal and the observed rates are  zero, or only one of the
signals can be decoded. The latter case occurs when  all the other users are transmitting at
or below a safe rate. With these assumptions, we can define
a constrained non-cooperative game. The set of allowed strategies for user $j$ is the set
of all probability distributions over $[0,+\infty[,$ and the payoff is a function of
the  rates. In addition, the rational action (rates) sets are restricted to lie in the capacity regions (the payoff is zero if the constraint is violated).
In order to study the interactions between the selfish or partially cooperative users
and their stationary rates in the long run,
we propose to model the rate allocation in Gaussian multiple access channels as an
evolutionary game with a continuous action space and coupled constraints. The development
of evolutionary game theory is a major contribution of
biology to competitive decision making and the evolution of cooperation. The key concepts of
evolutionary game theory are
 (i) {\it Evolutionary Stable Strategies} \cite{smith},  which is a refinement of equilibria, and (ii) {\it Evolutionary Game Dynamics}  such as replicator dynamics \cite{taylor}, which describes
the evolution of strategies or frequencies of use of strategies in time, \cite{vincent05,hofbauer}.
\begin{figure}[htb]
\centering
\includegraphics[scale=0.45]{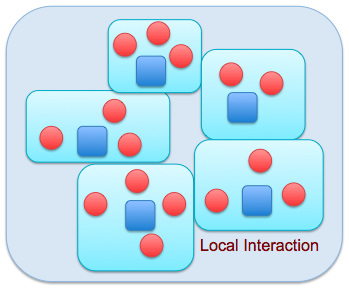}
\caption{A population: distributed receivers and senders, represented by blue rectangles and red circles respectively.} \label{figfuncttt3}
\hspace{0.2cm}
\end{figure}

The single population evolutionary rate allocation game is described as follows:
there is one population of senders (users) and several receivers. The number of senders is  large. At  each time, there are many one-shot games called {\it local interactions}.
Each sender of the
population chooses from the same set of strategies
${\mathcal{A}}$ which is a  non-empty, convex and compact subset of $\mathbb{R}.$ Without loss of generality, we can suppose that user $j$ chooses its rate in the interval $\mathcal{A}=[0,C_{\{j\}}]$, where  $C_{\{j\}}$ is the rate upper bound for user $j$ (to be made precise shortly), as outside of the capacity region the payoff (as to be defined later) will be zero.
Let $\Delta({\mathcal{A}}) $ be
the set of probability distributions over the  pure strategy set $\mathcal{A}.$ The set
$\Delta({\mathcal{A}})$ can be interpreted as the set of mixed strategies.
It is also interpreted as the set of distributions of strategies among
the population. Let $\lambda_t\in \Delta({\mathcal{A}}),$ and $E$ be a   $\lambda_t-$ measurable subset of $\mathbb{R}^m$; then  $\lambda_t(E)$ represents the fraction of users choosing
a strategy out of $ E$, at time $t.$ A distribution $\lambda_t\in \Delta({\mathcal{A}}) $ is sometimes called the ``state"
 of the population.   We denote by $\mathbb{B}(\mathcal{A})$ the Borel $\sigma-$algebra on ${\mathcal{A}}$ and by $d(\lambda,\lambda')$ the distance between two states measured with the respect to the weak topology.
 Each user's payoff depends on opponents' behavior through
the distribution of opponents' choices and of their strategies. The payoff of a user $j$ in a local interaction with $(m-1)$ other users is given as a function $u^j:\ \mathbb{R}^m\longrightarrow \mathbb{R}.$ The rate profile $\alpha\in\mathbb{R}^m$ must belong to a common capacity region $\mathcal{C}\subset\mathbb{R}^m $ defined by $2^m-1$ linear inequalities. The
expected payoff of a sender transmitting with the rate $a$ when the state of the
population is $\mu\in \Delta(\mathcal{A})$ is given by  $F(a,\mu).$  The expected payoff is $$F(\lambda,\mu):=\int_{\alpha\in \mathcal{C}}u(\alpha)\ \lambda(d\alpha^j) \prod_{i\neq j}\mu(d\alpha^i) .$$
The population state is subjected to the ``mixed extension" of capacity constraints $\mathcal{M}(\mathcal{C}).$
This will be discussed in Section \ref{secdynamics} and will be made more precise later.

\subsection{Local Interactions}
A local interaction refers to the problem setting of one receiver and its uplink additive white Gaussian noise (AWGN) multiple access channel with several senders (say $m\geq 2$) with coupled constraints (or actions). The signal at the receiver is given by $ Y=\xi+\sum_{j=1}^{m}X_j$ where $X_j$ is a transmitted signal of user $j$ and $\xi$ is zero mean Gaussian noise with variance $\sigma_0^2.$ Each user has an individual power constraint $\mathbb{E}(X_j^2)\leq P.$  The optimal power allocation scheme
is to transmit at the maximum power available, i.e. $P$, for each user. Hence, we consider the case in which maximum power is attained. The decisions of the users  then  consist of choosing
their communication rates, and the receiver's  role is to
decode, if possible. The capacity region is a set of all vectors $\alpha\in \R^{m}_{+}$ such that users $j=1,2,\ldots,m$  can reliably communicate at rate $\alpha^j, ~j=1,\ldots,m.$
The capacity region $\mathcal{C}$ for this channel is the set
\begin{eqnarray} \label{tet}
\mathcal{C}=
\nonumber \left\{\alpha\in \R^{m}_{+} ~\bigg|~ \sum_{j\in J}\alpha^j\leq   \log\left(1+|J|\frac{P}{\sigma^2_0}\right), \right.\\
 \left. \forall\ \emptyset \subsetneqq J\subseteq \Omega\right\}
 \end{eqnarray}
 where $\Omega:=\{1,2,\ldots,m\}.$
 We refer the reader to \cite{ref2} for more details on the capacity region.
Notice that there is a tradeoff between high and low rates: if user $j$ wants to communicate at a higher rate, one of the other users $k$ may need to lower its rate, otherwise the capacity constraint is violated.

\begin{examples}{(Example of capacity region with three users)}
In this example, we illustrate the capacity region with three users. Let $\alpha^1,\alpha^2,\alpha^3$ be the rates of the users. Based on (\ref{tet}), we obtain

$$\left\{\begin{array}{l}\alpha^1\geq 0,\alpha^2\geq 0,\alpha^3\geq 0\\
\alpha^1\leq \log(1+\frac{P}{\sigma_0^2})\\
\alpha^2\leq \log(1+\frac{P}{\sigma_0^2})\\
\alpha^3\leq \log(1+\frac{P}{\sigma_0^2})\\
\alpha^1+\alpha^2\leq \log(1+2\frac{P}{\sigma_0^2})\\
\alpha^1+\alpha^3\leq \log(1+2\frac{P}{\sigma_0^2})\\
\alpha^2+\alpha^3\leq \log(1+2\frac{P}{\sigma_0^2})\\
\alpha^1+\alpha^2+\alpha^3\leq \log(1+3\frac{P}{\sigma_0^2})\\
\end{array}\right. \Longleftrightarrow M_3\gamma_3\leq \zeta_3,\ $$ where in the compact notation, $$ \gamma_3:=\left(\begin{array}{c}\alpha^1\\ \alpha^2\\ \alpha^3\end{array}\right)\in\mathbb{R}_+^3,\
\zeta_3:=\left(\begin{array}{c}C_{\{1\}}\\ C_{\{2\}}\\ C_{\{3\}}\\ C_{\{1,2\}}\\ C_{\{1,3\}}\\ C_{\{2,3\}}\\ C_{\{1,2,3\}}\end{array}\right) ,$$ $$M_3:=\left(\begin{array}{ccc}1 & 0 & 0 \\ 0 & 1 & 0\\ 0 & 0 & 1\\
 1 & 1 & 0\\ 1& 0 & 1\\ 0 & 1& 1\\ 1 & 1 & 1
 \end{array}\right)\in \mathbb{Z}^{7\times 3}.$$ Note that $M_3$ is a totally unimodular matrix.
By letting $P=25,\sigma_0^2=0.1,$ we show in Figure~\ref{figfunctt3} the capacity region with three  users.
\begin{figure}[htb]
\centering
\includegraphics[scale=0.4]{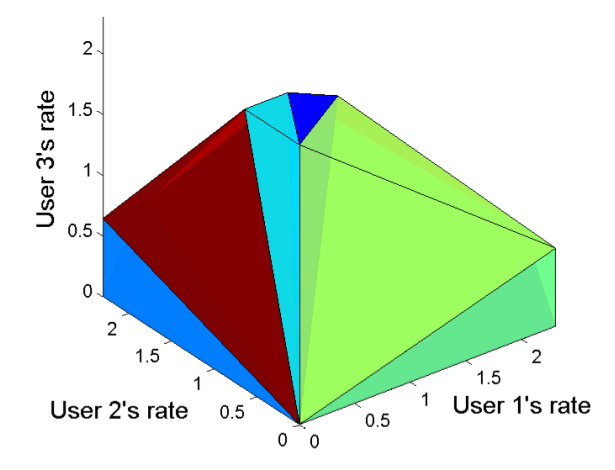}\\ 
\caption{Capacity region with three users.}
  \label{figfunctt3}
\end{figure}
\end{examples}

We denote by $$r_{m}=\log\left(1+\frac{P}{\sigma_0^2+(m-1) P}\right)$$ the  rate  of a user when the signal of the $m-1$ other users is treated as  noise, and  $C_J=\log(1+|J|\frac{P}{\sigma_0^2}) $ its capacity. Note that $r_m=C_{\{m\}}-C_{\{m-1\}}.$
The set $\mathcal{C}$ is clearly a non-empty and bounded subset of $\R^{m}.$  $\mathcal{C}$ is closed and  is defined by $2^{m}-1$  convex inequalities. Thus, $\mathcal{C}$ is convex and compact.  From  the inequality
 $$\log\left(1+\sum_{j\in J}x_j\right)\leq \log\left(\prod_{j\in J}(1+x_j)\right)=\sum_{j\in J}\log(1+x_j),$$ for all $\forall x\in \R^{|J|}_{+},$ we obtain $C_{J}\leq \sum_{j\in J}C_{\{j\}}.$
%
%

\subsection{Payoff}
We define the payoff of user $j$ as $$ u^j(\alpha^{j},\alpha^{-j})=\left\{ \begin{array}{ll} g(\alpha^j) & \mbox{if}\   (\alpha^{j},\alpha^{-j})\in \mathcal{C}\\
0 & \mbox{otherwise}
\end{array}
\right.,$$ where $\alpha^j$ is the rate of the user $j$; the vector $\alpha^{-j}:=(\alpha^1,\ldots,\alpha^{j-1},\alpha^{j+1},\ldots,\alpha^m)$ is a profile of rates of the other users; the function $g:\ \R\rightarrow \R$ is a positive and strictly increasing  function. Given the strategy profile $\alpha^{-j}$ of the others players,  player $j$ has to maximize $u^j(\alpha^{j},\alpha^{-j})$ under its action constraints $$\mathcal{A}(\alpha^{-j}):=\{\alpha^{j}\in [0,C_{\{j\}}],\ (\alpha^{j},\alpha^{-j})\in \mathcal{C} \}.$$
Using the monotonicity of the function $g$ and the inequalities that define the capacity region, we obtain the following lemma.
\begin{lem} \label{lembr}
Let $\overline{BR}(\alpha^{-j})$ be the best reply to the strategy $\alpha^{-j}$ is  $$ \overline{BR}(\alpha^{-j})=\arg\max_{y\in \mathcal{A}(\alpha^{-j})}u^j(y,\alpha^{-j}).$$ $\overline{BR}$ is a non-empty single-valued correspondence (i.e., a  standard function) which is given by  $$\max\left(r_{m},\min_{J}\left\{ \ C_{J}-\sum_{k\in J\ \atop k\neq j}\alpha^{k},\ J\in \Gamma_j\right\}\right)\,$$ where $\Gamma_j :=\{J\in 2^{\Omega},\  J \ni j\}$.
\end{lem}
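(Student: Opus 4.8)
The plan is to reduce the computation of $\overline{BR}(\alpha^{-j})$ to a one-dimensional maximization and then read the answer off the linear inequalities that cut out $\mathcal{C}$. First I would note that, by the very definition of the admissible set, every $y\in\mathcal{A}(\alpha^{-j})$ satisfies $(y,\alpha^{-j})\in\mathcal{C}$, so $u^j(y,\alpha^{-j})=g(y)$ throughout $\mathcal{A}(\alpha^{-j})$; hence $\overline{BR}(\alpha^{-j})=\arg\max_{y\in\mathcal{A}(\alpha^{-j})}g(y)$ and the only property of $g$ that matters is that it is strictly increasing. Since a strictly increasing function attains its maximum over any set possessing a greatest element uniquely at that element (no continuity of $g$ is needed), it suffices to exhibit $\mathcal{A}(\alpha^{-j})$ as a nonempty interval $[0,M]$ and to identify $M$; non-emptiness and single-valuedness of $\overline{BR}$ then follow at once.

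Next I would describe $\mathcal{A}(\alpha^{-j})$ explicitly. Splitting the $2^m-1$ inequalities defining $\mathcal{C}$ into those indexed by a set $J$ with $j\in J$ and those with $j\notin J$, only the former involve $\alpha^j$, and each is equivalent to $\alpha^j\le C_J-\sum_{k\in J,\,k\ne j}\alpha^k$; the inequalities with $j\notin J$ constrain $\alpha^{-j}$ alone and are assumed to hold --- equivalently, $\alpha^{-j}$ is a feasible rate vector for the remaining $m-1$ users (otherwise $\mathcal{A}(\alpha^{-j})$ is empty and there is nothing to prove on that profile). Intersecting with $\alpha^j\ge 0$, and noting that the case $J=\{j\}$ already supplies $\alpha^j\le C_{\{j\}}$, one obtains $\mathcal{A}(\alpha^{-j})=[0,M]$ with $M:=\min_{J\in\Gamma_j}\bigl\{C_J-\sum_{k\in J,\,k\ne j}\alpha^k\bigr\}$, and therefore $\overline{BR}(\alpha^{-j})=M$ as soon as $M\ge 0$.

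The remaining step, which is where the structure of the capacity region enters, is to prove $M\ge r_m$: this yields $M\ge 0$ (so $\mathcal{A}(\alpha^{-j})$ is genuinely nonempty) and shows that the maximum in the claimed formula is attained by its second argument, i.e. $\max(r_m,M)=M$. I would use that $C_J$ depends on $J$ only through $|J|$ and that, writing $C_\ell:=\log(1+\ell P/\sigma_0^2)$, the map $\ell\mapsto C_\ell$ is concave, so its successive increments $C_\ell-C_{\ell-1}$ form a decreasing sequence; in particular $C_\ell-C_{\ell-1}\ge C_m-C_{m-1}=r_m$ for every $\ell\le m$. On the other hand, for $J\in\Gamma_j$ with $|J|=\ell$ the index set $J\setminus\{j\}$ has size $\ell-1$ and avoids $j$, so the corresponding inequality of $\mathcal{C}$, valid by feasibility of $\alpha^{-j}$, gives $\sum_{k\in J,\,k\ne j}\alpha^k\le C_{\ell-1}$. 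Combining, $C_J-\sum_{k\in J,\,k\ne j}\alpha^k\ge C_\ell-C_{\ell-1}\ge r_m$ for every $J\in\Gamma_j$ (for $\ell=1$ this reads $C_{\{j\}}=C_1\ge r_m$), whence $M\ge r_m$. Putting the three steps together gives $\overline{BR}(\alpha^{-j})=M=\max(r_m,M)$, a single-valued function of $\alpha^{-j}$. I expect the only non-routine point to be this last step --- the decreasing-increments property of $\ell\mapsto C_\ell$ together with keeping straight which inequalities bind $\alpha^j$ and which only constrain $\alpha^{-j}$; everything else is immediate from the definitions and the convexity of $\mathcal{C}$ established earlier.
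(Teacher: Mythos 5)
Your proof is correct and follows exactly the route the paper indicates (it gives no explicit proof, only the remark that the lemma follows from the monotonicity of $g$ and the capacity inequalities): monotonicity reduces the problem to identifying the right endpoint $M=\min_{J\in\Gamma_j}\{C_J-\sum_{k\in J,\,k\neq j}\alpha^k\}$ of the feasible interval $\mathcal{A}(\alpha^{-j})=[0,M]$. Your additional step showing $M\ge r_m$ via the decreasing increments of $\ell\mapsto\log(1+\ell P/\sigma_0^2)$, so that $\max(r_m,M)=M$ whenever $\alpha^{-j}$ is feasible, is a worthwhile detail the paper leaves implicit.
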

\begin{subprop}\label{ne}
The set of Nash equilibria is $$\{ (\alpha^j,\alpha^{-j})\ |\  \alpha^{j}\geq r_m,\sum_{j}\alpha^{j}=C_{\Omega}\}.$$ All these equilibria are optimal in the Pareto sense.\footnote{An allocation of payoffs is Pareto optimal or Pareto efficient if there is no other feasible allocation that makes every user at least as well off and at least one user strictly better off under the capacity constraint.}
\end{subprop}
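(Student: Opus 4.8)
The plan is to prove the two assertions of Proposition~\ref{ne} in sequence: first the characterization of Nash equilibria, and then Pareto optimality of each such equilibrium.

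For the characterization, I would start from Lemma~\ref{lembr}, which gives the explicit best-reply map $\overline{BR}(\alpha^{-j})=\max\bigl(r_m,\min_{J\in\Gamma_j}\{C_J-\sum_{k\in J,\,k\ne j}\alpha^k\}\bigr)$. A profile $\alpha$ is a Nash equilibrium precisely when $\alpha^j=\overline{BR}(\alpha^{-j})$ for every $j$. The containment ``$\supseteq$'': suppose $\alpha^j\ge r_m$ for all $j$ and $\sum_j\alpha^j=C_\Omega$. Feasibility holds because $\sum_{j\in J}\alpha^j=C_\Omega-\sum_{j\notin J}\alpha^j\le C_\Omega-(m-|J|)r_m$, and one checks via the identity $r_m=C_{\{m\}}-C_{\{m-1\}}$ together with the subadditivity/telescoping structure of the $C_J$ that this is $\le C_J$; hence $\alpha\in\mathcal C$. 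Taking $J=\Omega\in\Gamma_j$ in the $\min$ shows $\overline{BR}(\alpha^{-j})\le C_\Omega-\sum_{k\ne j}\alpha^k=\alpha^j$, and since $\alpha^j\ge r_m$ and $\overline{BR}\ge r_m$ always, combined with the fact that $\alpha$ being feasible forces $\overline{BR}(\alpha^{-j})\ge\alpha^j$ (the argmax is at least any feasible point), we get equality. For ``$\subseteq$'': at a Nash equilibrium $\alpha^j=\overline{BR}(\alpha^{-j})\ge r_m$ immediately. To see $\sum_j\alpha^j=C_\Omega$, note $\alpha^j=\overline{BR}(\alpha^{-j})$ implies either $\alpha^j=r_m$ or the grand-coalition constraint $\sum_j\alpha^j\le C_\Omega$ is tight (i.e. $\alpha^j=C_\Omega-\sum_{k\ne j}\alpha^k$). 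If for some $j$ we had $\alpha^j=r_m$ with the grand constraint slack, I would derive a contradiction: then $\alpha^j<\overline{BR}(\alpha^{-j})$ because $C_\Omega-\sum_{k\ne j}\alpha^k>r_m$ and every other constraint $C_J-\sum_{k\in J,k\ne j}\alpha^k$ is also $\ge$ this (using that the omitted users contribute nonnegatively and the $C_J$ are ``concave'' in $|J|$), contradicting the best-reply equation. So the grand constraint is tight, giving $\sum_j\alpha^j=C_\Omega$.

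For Pareto optimality, the key observation is that $g$ is strictly increasing, so a user is strictly better off only by strictly increasing its own rate. Suppose $\beta\in\mathcal C$ Pareto-dominates a candidate equilibrium $\alpha$ with $\sum_j\alpha^j=C_\Omega$: then $g(\beta^j)\ge g(\alpha^j)$ for all $j$, hence $\beta^j\ge\alpha^j$ for all $j$, with strict inequality for at least one $j$. Summing, $\sum_j\beta^j>\sum_j\alpha^j=C_\Omega$, which violates the grand-coalition inequality $\sum_j\beta^j\le C_\Omega$ defining $\mathcal C$. Hence no such $\beta$ exists, and $\alpha$ is Pareto optimal.

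The main obstacle is the feasibility verification in the ``$\supseteq$'' direction and the contradiction argument in ``$\subseteq$'': both rest on the inequality $C_\Omega-(m-|J|)r_m\le C_J$, equivalently $\sum_{j\in J}C_{\{j\}}$-type bounds reorganized through $r_m=C_{\{m\}}-C_{\{m-1\}}$ and the concavity of $k\mapsto\log(1+kP/\sigma_0^2)$. I would isolate this as a short preliminary computation (it is essentially the statement that the point with all coordinates equal to $r_m$ except one large coordinate lies in $\mathcal C$, extended by the exchange structure), and once it is in hand the rest of the proposition follows cleanly from Lemma~\ref{lembr} and monotonicity of $g$.
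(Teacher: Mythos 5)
Your ``$\supseteq$'' direction and your Pareto-optimality argument are correct, and are in fact more careful than the paper's own proof (which simply invokes Lemma~\ref{lembr} to conclude that any profile on the maximal face is a strict equilibrium, and asserts Pareto optimality because each rate is maximized under the constraint). The genuine gap is in your ``$\subseteq$'' direction. The dichotomy ``either $\alpha^j=r_m$ or the grand-coalition constraint is tight'' does not follow from $\alpha^j=\overline{BR}(\alpha^{-j})$: the $\min$ in Lemma~\ref{lembr} ranges over \emph{all} coalitions $J\ni j$, so the binding constraint for user $j$ may be a proper subset of $\Omega$. For instance, with $m=3$ and $\alpha=(C_{\{1,2\}}/2,\,C_{\{1,2\}}/2,\,0)$, user $1$ satisfies $\alpha^1=\overline{BR}(\alpha^{-1})$ because $\alpha^1+\alpha^2\le C_{\{1,2\}}$ is tight, yet $\sum_j\alpha^j=C_{\{1,2\}}<C_\Omega$; this profile is not a Nash equilibrium (user $3$ is not best-responding), but it shows your per-user dichotomy fails and cannot drive the argument. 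Moreover, the inequality you invoke to dispose of the $\alpha^j=r_m$ case, namely $C_J-\sum_{k\in J,k\ne j}\alpha^k\ge C_\Omega-\sum_{k\ne j}\alpha^k$, is equivalent to $\sum_{k\notin J}\alpha^k\ge C_\Omega-C_J$; from $\alpha^k\ge r_m$ you only get $\sum_{k\notin J}\alpha^k\ge(m-|J|)r_m$, and the concavity of $k\mapsto\log(1+kP/\sigma_0^2)$ gives $(m-|J|)r_m\le C_\Omega-C_J$ --- the wrong direction.

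The correct way to close ``$\subseteq$'' is as follows. At a feasible Nash equilibrium, feasibility forces $\min_{J\in\Gamma_j}\{C_J-\sum_{k\in J,k\ne j}\alpha^k\}\ge\alpha^j$, so the best-reply equation $\alpha^j=\max\bigl(r_m,\min_J\{\cdot\}\bigr)$ forces $\min_J\{\cdot\}=\alpha^j$; hence every user $j$ belongs to some \emph{tight} coalition $J_j$ with $\sum_{k\in J_j}\alpha^k=C_{J_j}$. Since $J\mapsto C_J=\log(1+|J|P/\sigma_0^2)$ is submodular (a concave function of $|J|$), tight sets are closed under union: if $\alpha(A)=C_A$ and $\alpha(B)=C_B$, then $\alpha(A\cup B)=\alpha(A)+\alpha(B)-\alpha(A\cap B)\ge C_A+C_B-C_{A\cap B}\ge C_{A\cup B}$, whence equality by feasibility. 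Taking the union of the $J_j$ over all $j$ gives $\alpha(\Omega)=C_\Omega$. Note that the paper's proof is silent on exactly this point as well (it merely asserts that when $\sum_j\beta^j<C_\Omega$ some user can improve), so you were right to sense that something needed proving here; the submodularity argument is the missing ingredient. The rest of your proposal (feasibility of the face via $C_\Omega-(m-|J|)r_m\le C_J$, the fixed-point verification, and Pareto optimality via the sum constraint) is sound.
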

\begin{proof} Let $\beta\in\mathcal{C}.$ If $$\sum_{j=1}^m \beta^j< C_{\Omega}=\log(1+m\frac{P}{\sigma_0^2})\,,$$ then at least one of the users can improve its rate (hence its payoff) to reach one of the faces of the capacity region. We now check the strategy profile in the face $$ \{ (\alpha^j,\alpha^{-j})\ |\  \alpha^{j}\geq r_m,\sum_{j=1}^m\alpha^{j}=C_{\Omega}\}.$$
If  $$\beta\in \{ (\alpha^j,\alpha^{-j})\ |\  \alpha^{j}\geq r_m,\sum_{j=1}^m\alpha^{j}=C_{\Omega}\},$$ then
from the Lemma\label{lembr}, $\overline{BR}(\beta^{-j})=\{\beta^j\}.$ Hence, $\beta$ is a strict equilibrium. Moreover, this strategy $\beta$ is Pareto optimal because  the rate of each user is maximized under the capacity constraint. These strategies are social welfare if the quantity $$\sum_{j=1}^m u^j(\alpha^j,\alpha^{-j})=\sum_{j=1}^m g(\alpha^j)$$ is maximized.
\end{proof} Note that the set of pure Nash equilibria is a convex subset of the capacity region.

\section{Robust equilibria and efficiency measures} \label{secrobust}

\subsection{Constrained Strong Equilibria and Coalition Proofness}
An action profile  in a local interaction between $m$ senders is a constrained $k-$strong equilibrium if it is feasible and no coalition of size $k$ can
improve the rate transmissions of each of its members with  respect to the capacity constraints. An action is a constrained strong equilibrium \cite{aumann} if it is a constrained $k-$strong equilibrium for any size $k.$  A strong equilibrium is then a policy from which no coalition
(of any size) can deviate and improve the transmission rate of every
member of the coalition (group of the simultaneous moves), while possibly lowering the transmission rate
of users outside the coalition group. This notion of  constrained strong equilibria \footnote{Note that the set of constrained strong equilibria  is a subset of Nash equilibria (by taking coalitions of size one) and any constrained strong equilibrium is Pareto optimal (by taking coalition  of full size).} is very attractive because it is resilient against coalitions of users. Most of the games do not admit any strong
equilibrium but in our case we will show that the multiple access channel game has several strong equilibria.

\begin{subthm} Any rate profile on the maximal  face of the capacity region $\mathcal{C}:$
$$ Face_{\max}(\mathcal{C}):=\{ (\alpha^j,\alpha^{-j})\in\mathbb{R}^m\ |\  \alpha^{j}\geq r_m,\sum_{j=1}^m\alpha^{j}=C_{\Omega}\},$$
is a constrained strong equilibrium.
\end{subthm}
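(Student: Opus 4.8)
The plan is to argue by contradiction, using the single fact that the maximal face saturates the grand-coalition constraint $\sum_{j=1}^m\alpha^j = C_\Omega$. First I would record the trivial feasibility part of the definition: a profile $\alpha\in Face_{\max}(\mathcal{C})$ lies in $\mathcal{C}$, so it is feasible, and since $g$ is positive we have $u^j(\alpha)=g(\alpha^j)>0$ for every $j\in\Omega$. Next, fix an arbitrary nonempty coalition $S\subseteq\Omega$ and suppose it has a constrained profitable deviation, i.e. a rate profile $\tilde\alpha$ with $\tilde\alpha^k=\alpha^k$ for all $k\notin S$ and $u^j(\tilde\alpha)>u^j(\alpha)$ for every $j\in S$. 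Because $u^j(\alpha)=g(\alpha^j)>0$, the requirement $u^j(\tilde\alpha)>0$ forces $\tilde\alpha\in\mathcal{C}$ (otherwise the payoff of $j$ under $\tilde\alpha$ would be $0$, not an improvement); consequently $u^j(\tilde\alpha)=g(\tilde\alpha^j)$, and strict monotonicity of $g$ turns $g(\tilde\alpha^j)>g(\alpha^j)$ into $\tilde\alpha^j>\alpha^j$ for every $j\in S$.

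Then I would simply sum the coordinates: since the members of $S$ strictly raise their rates while the others leave theirs unchanged,
$$\sum_{j=1}^m \tilde\alpha^j = \sum_{j\in S}\tilde\alpha^j + \sum_{k\notin S}\alpha^k > \sum_{j\in S}\alpha^j + \sum_{k\notin S}\alpha^k = \sum_{j=1}^m \alpha^j = C_\Omega.$$
This contradicts $\tilde\alpha\in\mathcal{C}$, which in particular includes the inequality indexed by $J=\Omega$ in (\ref{tet}), namely $\sum_{j\in\Omega}\tilde\alpha^j\le C_\Omega$. Hence no nonempty coalition has a profitable constrained deviation, so $\alpha$ is a constrained $k$-strong equilibrium for every $k$, i.e. a constrained strong equilibrium.

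I do not expect a genuinely hard step here: the statement is essentially a one-line consequence of the fact that on $Face_{\max}(\mathcal{C})$ the total rate already equals the maximum $C_\Omega$, so any coalition wishing to improve all of its members must push the aggregate rate strictly above $C_\Omega$ and thereby leave $\mathcal{C}$. The only points deserving a line of care are bookkeeping: (i) noting that membership in $Face_{\max}(\mathcal{C})$ entails feasibility, so that the deviations really are constrained to $\mathcal{C}$ and the baseline payoffs are strictly positive; and (ii) observing that the same computation simultaneously covers the boundary cases $|S|=1$, recovering the Nash property of Proposition \ref{ne}, and $|S|=m$, recovering Pareto optimality, so no case-splitting on coalition size is needed. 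It is worth remarking that the bound $\alpha^j\ge r_m$ is not used in this argument; it only serves to guarantee that the hyperplane $\{\sum_j\alpha^j=C_\Omega\}$ meets $\mathcal{C}$ exactly in this set of Nash equilibria.
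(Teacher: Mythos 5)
Your proof is correct and rests on the same key fact as the paper's: on the maximal face the grand-coalition constraint $\sum_{j\in\Omega}\alpha^j\le C_{\Omega}$ is saturated, so a coalition whose members all strictly improve must (by positivity and strict monotonicity of $g$, which force $\tilde\alpha\in\mathcal{C}$ and $\tilde\alpha^j>\alpha^j$) push the total rate above $C_{\Omega}$ and leave the capacity region. The paper packages this as a two-case analysis on whether the deviants' total falls short of or meets the cap $C_{\Omega}-\sum_{j\notin Dev}\alpha^j$, concluding in each case that some coalition member fails to improve; your single summation-to-contradiction is the contrapositive of the same argument and is, if anything, tidier.
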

\begin{proof}
 We remark that if the rate profile $\alpha$ is not on the maximal face of the capacity region, then $\alpha$ is not resilient to deviation by a single user. Hence, $\alpha$ cannot be a constrained strong equilibrium. This says that a necessary condition for a rate profile to be a strong equilibrium is to be in the subset  $Face_{\max}(\mathcal{C}).$ We now prove that the condition: $\alpha\in Face_{\max}(\mathcal{C})$ is sufficient. Let $\alpha\in Face_{\max}(\mathcal{C}).$ Suppose that $k$ users deviate simultaneously from the rate profile $\alpha.$ Denote by $Dev$ the set of users which deviate simultaneously (eventually by forming a coalition). The rate constraints of the  deviants are
 \begin{enumerate}
\item ${\alpha'}^j\geq 0, \ \forall j\in Dev,$
\item $\sum_{j\in\bar{J}}{\alpha'}^j\leq C_{\bar{J}},\ \forall \bar{J}\subseteq Dev,$
\item $\sum_{j\in J\cap Dev}{\alpha'}^j\leq C_{J}-\sum_{j\in J, j\notin Dev}\alpha^j$, $\ \forall {J}\subseteq \Omega,\ J\cap Dev\neq \emptyset.$
\end{enumerate}
In particular, for $J=\Omega,$ we have $\sum_{j\in Dev}{\alpha'}^j\leq C_{\Omega}-\sum_{ j\notin Dev}\alpha^j.$ The total rate of the deviants is bounded by
$C_{\Omega}-\sum_{ j\notin Dev}\alpha^j$, which is not controlled by the deviants. The deviants move to $({\alpha'}^j)_{j\in Dev}$ with  $$\sum_{j\in Dev}{\alpha'}^j <C_{\Omega}-\sum_{ j\notin Dev}\alpha^j\,.$$ Then, there  exists $j$ such that $\alpha^j>{\alpha'}^j.$  Since $g$ is non-decreasing, this implies that $g(\alpha^j)>g({\alpha'}^j).$ The user $j$ who is a member of the coalition $Dev$ does not improve its payoff.
If the rates of some of the deviants are increased, then the rates of some other users from coalition must decrease. If $({\alpha'}^j)_{j\in Dev}$ satisfies $$\sum_{j\in Dev}{\alpha'}^j =C_{\Omega}-\sum_{ j\notin Dev}\alpha^j\,,$$ then some users in the coalition $Dev$ have increased their rates compared with $(\alpha^j)_{j\in Dev}$ and some others in $Dev$ have decreased their rates of transmission (because the total rate is the constant $C_{\Omega}-\sum_{ j\notin Dev}\alpha^j).$ The users in $Dev$ with a lower rate ${\alpha'}^j\leq \alpha^j$ do not benefit to be member of the coalition (Shapley criterion of membership of coalition does not hold) . And this holds for any $\emptyset\subsetneqq Dev\subseteqq\Omega.$ This completes the proof.
\end{proof}
\begin{subcoro} In the constrained rate allocation game, Nash equilibria and strong equilibria in pure strategies coincide.
\end{subcoro}
\subsection{Constrained Potential Function for Local Interaction}
Introduce the following function:
$$V(\alpha)=\upharpoonleft_{\mathcal{C}}(\alpha) \sum_{j=1}^m g(\alpha^j)\,,$$ where $\upharpoonleft_{\mathcal{C}}$ is the indicator function of $\mathcal{C}, i.e.,\ $ $\upharpoonleft_{\mathcal{C}}(\alpha)=1$ if $\alpha\in\mathcal{C}$ and $0$ otherwise.
 The function $V$ satisfies $$ V(\alpha)-V(\beta^j,\alpha^{-j})=g(\alpha^j)-g(\beta^j),\ \forall \alpha,(\beta^,\alpha^{-j})\in\mathcal{C} .$$ If $g$ is  differentiable, then one has  $$\frac{\partial}{\partial \alpha^j}V(\alpha)= g'(\alpha^j)=\frac{\partial}{\partial \alpha^j}u^j$$ in the interior of the capacity region $\mathcal{C}$, and $V$ is a constrained potential function \cite{zhu} in pure strategies.

\begin{subcoro}
The local maximizers of $V$ in $\mathcal{C}$ are pure Nash equilibria. Global  maximizers of $V$ in $\mathcal{C}$ are both constrained strong  equilibria and social optima for the local interaction.
\end{subcoro}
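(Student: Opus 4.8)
The plan is to exploit the constrained potential structure established just above, together with the characterization of Nash equilibria (Proposition~\ref{ne}) and strong equilibria (the maximal-face theorem). The key identity is $V(\alpha)-V(\beta^j,\alpha^{-j})=g(\alpha^j)-g(\beta^j)$ for feasible profiles, which says that $V$ is an exact potential: unilateral changes in payoff are mirrored exactly by changes in $V$. First I would take a local maximizer $\alpha^*$ of $V$ in $\mathcal{C}$. For any player $j$ and any feasible unilateral deviation $\beta^j\in\mathcal{A}(\alpha^{*,-j})$ that is close to $\alpha^{*,j}$, local maximality gives $V(\alpha^*)\geq V(\beta^j,\alpha^{*,-j})$, hence $g(\alpha^{*,j})\geq g(\beta^j)$, i.e.\ $u^j(\alpha^{*,j},\alpha^{*,-j})\geq u^j(\beta^j,\alpha^{*,-j})$ for all nearby feasible $\beta^j$. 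Because $g$ is strictly increasing, the one-dimensional slice $\mathcal{A}(\alpha^{*,-j})$ is an interval, and $u^j$ restricted to it is monotone in $\beta^j$; a local maximum of a monotone function on an interval is automatically a global maximum on that interval. This upgrades the local statement to $\alpha^{*,j}\in\overline{BR}(\alpha^{*,-j})$ for every $j$, so $\alpha^*$ is a pure Nash equilibrium.

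For the second assertion, let $\alpha^*$ be a global maximizer of $V$ over $\mathcal{C}$. Since $\upharpoonleft_{\mathcal{C}}(\alpha)=1$ on $\mathcal{C}$, maximizing $V$ on $\mathcal{C}$ is the same as maximizing $\sum_{j=1}^m g(\alpha^j)$ over $\mathcal{C}$; this is precisely the social welfare objective, so $\alpha^*$ is a social optimum for the local interaction by definition. It remains to show $\alpha^*$ is a constrained strong equilibrium, and by the maximal-face theorem it suffices to prove $\alpha^*\in Face_{\max}(\mathcal{C})$, i.e.\ $\sum_j \alpha^{*,j}=C_\Omega$ and $\alpha^{*,j}\geq r_m$ for each $j$. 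The sum constraint follows because if $\sum_j\alpha^{*,j}<C_\Omega$, then as in the proof of Proposition~\ref{ne} some coordinate can be strictly increased while staying in $\mathcal{C}$, strictly increasing $\sum_j g(\alpha^j)$ since $g$ is strictly increasing, contradicting global maximality. The lower bound $\alpha^{*,j}\geq r_m$ then follows from the Nash-equilibrium characterization already proved in the first part (a global maximizer is in particular a local maximizer, hence a Nash equilibrium, and Proposition~\ref{ne} says every Nash equilibrium satisfies $\alpha^j\geq r_m$). Thus $\alpha^*\in Face_{\max}(\mathcal{C})$, and the maximal-face theorem gives that $\alpha^*$ is a constrained strong equilibrium.

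The main obstacle is the first step: being careful that ``local maximizer of $V$'' really does force the best-reply condition, since $V$ involves the indicator $\upharpoonleft_{\mathcal{C}}$ and is therefore discontinuous on the boundary of $\mathcal{C}$. The resolution is that a local maximizer of $V$ in $\mathcal{C}$ must lie in $\mathcal{C}$ (where $V$ takes value $\sum_j g(\alpha^j)$), and for the unilateral deviations we only compare with points $(\beta^j,\alpha^{*,-j})$ that are themselves in $\mathcal{C}$ — i.e.\ with $\beta^j\in\mathcal{A}(\alpha^{*,-j})$ — so the indicator is identically $1$ along the comparison and the potential identity applies verbatim. Combined with the fact that $\mathcal{A}(\alpha^{*,-j})$ is an interval and $g$ is strictly monotone, the local-to-global upgrade is immediate and the discontinuity of $V$ never actually enters the argument.
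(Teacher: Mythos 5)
Your proof is correct and follows the route the paper intends: the corollary is stated without an explicit proof, as an immediate consequence of the potential identity $V(\alpha)-V(\beta^j,\alpha^{-j})=g(\alpha^j)-g(\beta^j)$ together with Proposition~\ref{ne} and the maximal-face strong-equilibrium theorem, which is exactly the machinery you invoke. Your local-to-global upgrade via the strict monotonicity of $g$ on the interval $\mathcal{A}(\alpha^{*,-j})$ and your handling of the indicator function are both sound and, if anything, more careful than the paper's implicit argument.
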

\subsection{Strong Price of Anarchy }
Throughout  this subsection, we assume that the function $g$ is the identity function, i.e.,  $g(x)=id(x):=x.$
One of the approaches used to measure how much the performance of decentralized systems is
affected by the selfish behavior of its components is the {\it price of anarchy}. We present a similar price for strong equilibria under the coupled rate constraints. This notion of Price of Anarchy can be seen as an {\it efficiency metric} that measures the {\it price of selfishness} or decentralization and has been extensively used in the context of congestion games or routing games where typically users have to minimize a cost function. In the context of rate allocation in the multiple access channel, we define an equivalent  measure of price of anarchy  for  rate maximization problems.
One of the advantages of a strong equilibrium
is that it has the potential to reduce the distance between the
optimal solution and the solution obtained as an outcome
of selfish behavior, typically  in the case where the capacity constraint is violated at each time.
Since the constrained rate allocation game has  strong equilibria, we can define the  strong price of anarchy, introduced in \cite{stro}, as the ratio between the payoff of the worst constrained
strong equilibrium and  the  social optimum value which $C_{\Omega}$.
\begin{subthm}
The strong price of anarchy  of the  constrained rate allocation game is 1 for $g(x)=x.$
\end{subthm}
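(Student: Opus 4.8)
The plan is to combine the two facts already in hand: that the set of constrained strong equilibria coincides with $Face_{\max}(\mathcal{C})$ (the previous theorem and its corollary), and that here $g$ is the identity. First I would observe that on $Face_{\max}(\mathcal{C})$ the social welfare is \emph{constant}. Indeed, for any $\alpha\in Face_{\max}(\mathcal{C})\subseteq\mathcal{C}$ one has $u^j(\alpha)=g(\alpha^j)=\alpha^j$ for every $j$, hence
$$\sum_{j=1}^m u^j(\alpha)=\sum_{j=1}^m\alpha^j=C_{\Omega}$$
by the defining equality $\sum_{j}\alpha^j=C_\Omega$ of the maximal face. Thus \emph{every} constrained strong equilibrium — in particular a worst one — yields social welfare exactly $C_\Omega$.

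Next I would pin down the social optimum value. Maximizing $\sum_{j}\alpha^j$ over $\alpha\in\mathcal{C}$, the constraint in the definition of $\mathcal{C}$ corresponding to $J=\Omega$ gives $\sum_{j=1}^m\alpha^j\le C_\Omega$, and this bound is attained on $Face_{\max}(\mathcal{C})$. So the social optimum equals $C_\Omega$, which is the value quoted in the statement. The strong price of anarchy is then, by definition, the ratio of the payoff of the worst constrained strong equilibrium to this social optimum, namely $C_\Omega/C_\Omega=1$.

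The only point needing a separate word is that $Face_{\max}(\mathcal{C})$ is nonempty, so that strong equilibria exist and the ratio is well defined. Since $t\mapsto\log(1+tP/\sigma_0^2)$ is concave in $t$, its successive unit increments are nonincreasing, and $r_m$ is the last (smallest) of them; summing the $m$ increments gives $m\,r_m\le\log(1+mP/\sigma_0^2)=C_\Omega$, so the constant rate vector $(r_m,\dots,r_m)$ is feasible and can be raised coordinatewise to a point of $Face_{\max}(\mathcal{C})$. I do not expect a genuine obstacle here: the structural work — characterizing the strong equilibria and their Pareto optimality — was carried out in the preceding theorem, so what remains is the definitional bookkeeping of the efficiency ratio together with this nonemptiness check.
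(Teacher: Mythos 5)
Your argument is correct and follows the route the paper itself intends: the paper states this theorem without any proof, treating it as immediate from the preceding characterization of the strong equilibria as $Face_{\max}(\mathcal{C})$ together with the definition of CSPoA, which is exactly the bookkeeping you carry out. Your additional nonemptiness check via concavity of $t\mapsto\log(1+tP/\sigma_0^2)$ (giving $m\,r_m\le C_{\Omega}$) is a welcome detail that the paper also uses later, in the symmetric-equilibrium proposition, but never justifies.
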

Note that
for $g\neq id,$ the CSPoA can be less than one. However, the  optimistic
price of anarchy of the {\it best constrained  equilibrium}  also called {\it price of stability} \cite{psta} is one for any function $g$ i.e the  efficiency of "best" equilibria is $100\%.$

\section{Selection of Pure Equilibria} \label{secselection}
We have shown in previous sections that our rate allocation game has a continuum of pure Nash equilibria and strong equilibria.  We
address now the problem of selecting one equilibrium which has certain
desirable properties: the normalized pure Nash equilibrium, introduced in \cite{rosen}. See also \cite{corre,cor,ponstein}. We introduce the
Lagrangian that corresponds to the constrained maximization
problem faced by every user  when the other rates are at the maximal face of the polytope $\mathcal{C}$:
\begin{eqnarray}
\max_{\alpha} & u^j(\alpha)\\
 \textrm{s.t. }&\alpha^1+\ldots+\alpha^m=C_{\Omega}
\end{eqnarray}
 and the Lagrangian for user $j$ is given by $$L^j(\alpha,\zeta)=u^j(\alpha)-\zeta^j\left(\sum_{j}\alpha^j-C_{\Omega}\right).$$ From Karush-Kuhn-Tucker optimality conditions, it follows that there exists $\zeta\in\mathbb{R}^m$ such that $$ g'(\alpha^j)=\zeta^j,\ \sum_{j=1}^m\alpha^j=C_{\Omega}.$$ For a fixed vector $\zeta$ with identical entries, define the normal form game $\Gamma({\zeta})$ with $m$ users, where actions are taken as rates and the payoffs  given by $L(\alpha,\zeta).$ A normalized equilibrium is an equilibrium of the game $\Gamma(\zeta^*)$ where $\zeta^*$ is normalized into the form ${\zeta^*}^j=\frac{c}{\tau^j},\ c>0,\tau^j>0.$
We now have the following result due to Goodman \cite{cor} which implies Rosen's condition on uniqueness for strict concave games.
\begin{subthm} \label{ret1}
Let $u^j$ be a smooth and strictly concave function in $\alpha^j,$ each $u^j$ be convex in $\alpha^{-j}$, and there exist some $\zeta$ such that the weighted non-negative sum of the payoffs $
\sum_{j=1}^m\zeta^j u^j(\alpha)
$ is concave in $\alpha.$ Then the matrix $$G(\alpha,\zeta)+G^{T}(\alpha,\zeta)$$ is  negative definite (which implies uniqueness) where $G(\alpha,\zeta)$ is the Jacobian with respect to $\alpha$ of
 $$h(\alpha,\zeta):=\left[\zeta^1 \nabla_1 u^1(\alpha), \zeta^2 \nabla_2 u^2(\alpha),\ldots,
 \zeta^{m} \nabla_{m} u^{m}(\alpha)
 \right]^T$$
and $G^{T}$ is the transpose of the matrix $G.$
\end{subthm}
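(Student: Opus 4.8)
\medskip
\noindent\textbf{Proof plan.} The substantive part of the assertion is the negative definiteness of $G+G^{\top}$ at a given point $\alpha$; the parenthetical uniqueness statement will then follow from the classical diagonal‑strict‑concavity argument of Rosen, which I would cite rather than reprove. The plan is to read the hypothesis on $\zeta$ as strict positivity, $\zeta^{\ell}>0$ for all $\ell$ (some positivity is genuinely needed: if $\zeta^{\ell}=0$ the $\ell$‑th row of $G$ vanishes and definiteness fails), and to read ``$u^{j}$ strictly concave in $\alpha^{j}$'' as $\partial^{2}_{\alpha^{j}\alpha^{j}}u^{j}(\alpha)<0$ at the point considered. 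Set $H^{\ell}:=\nabla^{2}u^{\ell}(\alpha)$ and $\Phi(\alpha):=\sum_{\ell=1}^{m}\zeta^{\ell}u^{\ell}(\alpha)$. Since $\alpha^{\ell}$ is a scalar coordinate, $G_{\ell k}=\zeta^{\ell}(H^{\ell})_{\ell k}$, i.e.\ the $\ell$‑th row of $G$ is $\zeta^{\ell}$ times the $\ell$‑th row of $H^{\ell}$, so that $x^{\top}Gx=\sum_{\ell=1}^{m}\zeta^{\ell}x_{\ell}(H^{\ell}x)_{\ell}$ for every $x\in\R^{m}$.

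\medskip
Step 1 is a purely algebraic splitting. I would decompose $H^{\ell}$ into its scalar own‑curvature $a_{\ell}:=\partial^{2}_{\alpha^{\ell}\alpha^{\ell}}u^{\ell}(\alpha)$, the vector $b_{\ell}$ of mixed second derivatives $\partial^{2}_{\alpha^{\ell}\alpha^{k}}u^{\ell}(\alpha)$ for $k\neq\ell$, and the $(m-1)\times(m-1)$ block $C_{\ell}:=\nabla^{2}_{\alpha^{-\ell}}u^{\ell}(\alpha)$, so that $(H^{\ell}x)_{\ell}=a_{\ell}x_{\ell}+b_{\ell}^{\top}x_{-\ell}$ and $x^{\top}H^{\ell}x=a_{\ell}x_{\ell}^{2}+2x_{\ell}b_{\ell}^{\top}x_{-\ell}+x_{-\ell}^{\top}C_{\ell}x_{-\ell}$. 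Writing $2a_{\ell}x_{\ell}^{2}=a_{\ell}x_{\ell}^{2}+a_{\ell}x_{\ell}^{2}$ and regrouping gives
$$2\,x^{\top}Gx=\sum_{\ell=1}^{m}\zeta^{\ell}\bigl(a_{\ell}x_{\ell}^{2}+2x_{\ell}b_{\ell}^{\top}x_{-\ell}\bigr)+\sum_{\ell=1}^{m}\zeta^{\ell}a_{\ell}x_{\ell}^{2}=:S_{1}+S_{2}.$$

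\medskip
Step 2 bounds the two pieces, each using one structural hypothesis. Adding and subtracting $\sum_{\ell}\zeta^{\ell}x_{-\ell}^{\top}C_{\ell}x_{-\ell}$ rewrites $S_{1}=\sum_{\ell}\zeta^{\ell}x^{\top}H^{\ell}x-\sum_{\ell}\zeta^{\ell}x_{-\ell}^{\top}C_{\ell}x_{-\ell}=x^{\top}\nabla^{2}\Phi(\alpha)\,x-\sum_{\ell}\zeta^{\ell}x_{-\ell}^{\top}C_{\ell}x_{-\ell}$; the first term is $\le 0$ because $\Phi$ is concave by hypothesis, and the subtracted sum is $\ge 0$ because each $u^{\ell}$ is convex in $\alpha^{-\ell}$, so $C_{\ell}\succeq 0$, and $\zeta^{\ell}\ge 0$; hence $S_{1}\le 0$. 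For $S_{2}$, strict concavity of $u^{\ell}$ in $\alpha^{\ell}$ gives $a_{\ell}<0$, and with $\zeta^{\ell}>0$ every summand of $S_{2}$ is $\le 0$, with at least one of them (the index at which $x_{\ell}\neq 0$) strictly negative as soon as $x\neq 0$; thus $S_{2}<0$ for $x\neq 0$. Combining, $x^{\top}(G+G^{\top})x=2x^{\top}Gx=S_{1}+S_{2}<0$ for all $x\neq 0$, which is the asserted negative definiteness.

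\medskip
For the uniqueness remark I would quote Rosen: negative definiteness of $G(\alpha,\zeta)+G^{\top}(\alpha,\zeta)$ on the feasible set $\{\alpha\ge 0:\ \sum_{j}\alpha^{j}=C_{\Omega}\}$ is exactly his sufficient condition for the pseudo‑gradient map $h(\cdot,\zeta)$ to be diagonally strictly concave there (integrate the derivative of $h$ along the segment joining two points and symmetrize the resulting quadratic form), and diagonal strict concavity forces two normalized equilibria for the same $\zeta$ to coincide via their Karush--Kuhn--Tucker conditions. The one place I expect to need care is the index bookkeeping in the block decomposition of $H^{\ell}$ (coordinate $\ell$ versus its complement $-\ell$), together with being explicit that $\zeta$ is taken strictly positive and that ``strictly concave'' is used in the pointwise‑Hessian sense; once the identity $2x^{\top}Gx=S_{1}+S_{2}$ is in hand, the matrix inequality is immediate and no further computation is required.
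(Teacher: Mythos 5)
Your proof is correct. Note that the paper itself gives no proof of this theorem --- it is stated as a result ``due to Goodman \cite{cor}'' and left to the citation --- and your argument is precisely the one in Goodman's note: the splitting $2x^{\top}Gx=S_{1}+S_{2}$, with $S_{1}\le 0$ from concavity of the weighted sum minus the nonnegative terms $\zeta^{\ell}x_{-\ell}^{\top}C_{\ell}x_{-\ell}$, and $S_{2}<0$ from own-variable strict concavity, is the whole content. Your two reading conventions are the right ones and worth being explicit about, since the theorem as stated is slightly loose on both: $\zeta^{\ell}>0$ strictly is genuinely required (otherwise a row of $G$ vanishes), and ``strictly concave in $\alpha^{j}$'' must be taken in the pointwise-Hessian sense $\partial^{2}_{\alpha^{j}\alpha^{j}}u^{j}<0$ (strict concavity alone permits a vanishing second derivative, e.g.\ $-(\alpha^{j})^{4}$ at the origin, which would break the strict negativity of $S_{2}$). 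With those readings fixed, the algebra is complete and the uniqueness remark follows from Rosen's diagonal-strict-concavity argument exactly as you indicate.
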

This now leads to the following corollary for our problem.
\begin{subcoro}
If $g$ is a non-decreasing strictly concave function, then the rate allocation game has a unique normalized equilibrium which corresponds to an equilibrium of the normal form game with payoff $L(\alpha,\zeta^*)$ for some $\zeta^*.$
\end{subcoro}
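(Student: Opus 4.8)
The plan is to check that, on the constraint manifold where the normalized-equilibrium construction of Section~\ref{secselection} takes place, the rate allocation game is one to which Theorem~\ref{ret1} (Goodman's criterion) applies, then to read off both existence and uniqueness, and finally to identify $\zeta^*$ explicitly so that the phrasing of the corollary is met.

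First I would fix the arena. Adjoining the shared constraint $\sum_{j=1}^m\alpha^j=C_\Omega$ to $\mathcal C$ gives, by Proposition~\ref{ne}, exactly $Face_{\max}(\mathcal C)=\{\alpha\in\mathbb R^m:\ \alpha^j\ge r_m,\ \sum_j\alpha^j=C_\Omega\}$, which is non-empty, convex and compact. On this set the ``zero outside $\mathcal C$'' branch of $u^j$ is inactive, so $u^j(\alpha)=g(\alpha^j)$ there with the same regularity as $g$. I would note in passing that the corollary (like the Lagrangian computation preceding Theorem~\ref{ret1}) tacitly asks for $g\in C^1$ to speak of $g'$, and in fact for $g\in C^2$ so that the Jacobian $G$ of $h$ is defined; I will assume $g$ smooth with $g'$ strictly decreasing (which is strict concavity) and $g''<0$.

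Second, I would verify the three hypotheses of Theorem~\ref{ret1}. (i) $u^j(\cdot,\alpha^{-j})=g(\cdot)$ is strictly concave in $\alpha^j$ by hypothesis on $g$. (ii) $u^j=g(\alpha^j)$ does not depend on $\alpha^{-j}$, hence is trivially (being constant) convex in $\alpha^{-j}$. (iii) Choosing $\zeta^j=1$ for all $j$ (any strictly positive weight vector works), $\sum_j\zeta^j u^j(\alpha)=\sum_j\zeta^j g(\alpha^j)$ is a non-negative combination of the concave maps $\alpha\mapsto g(\alpha^j)$, hence concave in $\alpha$. Theorem~\ref{ret1} then gives that $G(\alpha,\zeta)+G^T(\alpha,\zeta)$ is negative definite; here this is transparent, since $\nabla_j u^j(\alpha)=g'(\alpha^j)$ is a scalar, so $h(\alpha,\zeta)=\big(\zeta^j g'(\alpha^j)\big)_{j=1}^m$ and its Jacobian $G(\alpha,\zeta)=\mathrm{diag}\big(\zeta^j g''(\alpha^j)\big)$ is already diagonal with strictly negative entries. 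This is Rosen's diagonal strict concavity condition, so the normalized equilibrium (for each fixed strictly positive weight vector $\tau$) is unique; its existence is the standard Rosen/Kakutani fixed-point argument for a concave game over the compact convex coupled-constraint set $Face_{\max}(\mathcal C)$.

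Finally, to match the statement I would pin down $\zeta^*$ and the equilibrium point. The first-order (KKT) conditions recorded before Theorem~\ref{ret1} say $g'(\alpha^j)={\zeta^*}^j=c/\tau^j$ and $\sum_j\alpha^j=C_\Omega$; for the uniform weights $\tau^j\equiv 1$, strict monotonicity of $g'$ forces $\alpha^j=(g')^{-1}(c)$ to be the same for every $j$, whence $\alpha^j=C_\Omega/m$ and $c=g'(C_\Omega/m)$, and this point is the unique equilibrium of the normal-form game $\Gamma(\zeta^*)$ with payoffs $L(\cdot,\zeta^*)$, as claimed. The only delicate points I anticipate are bookkeeping ones: keeping the argument on the smooth face $Face_{\max}(\mathcal C)$ rather than with the discontinuous $u^j$ of the full game; correctly invoking Rosen's coupled-constraint equilibrium theorem for a strategy set that is not a Cartesian product (precisely the setting Rosen's framework was built for, so a citation rather than an obstruction); the mild $C^2$/$g''<0$ smoothing already mentioned; and, if one wants the normalized equilibrium also to be a Nash equilibrium of the original game, verifying the elementary inequality $C_\Omega/m\ge r_m$ so that the symmetric point lies in $Face_{\max}(\mathcal C)$.
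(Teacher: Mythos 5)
Your proposal is correct and follows exactly the route the paper intends: the paper states this corollary as an immediate consequence of Theorem \ref{ret1}, and your verification that $u^j(\alpha)=g(\alpha^j)$ is strictly concave in $\alpha^j$, trivially convex in $\alpha^{-j}$, and has a concave weighted sum is precisely the instantiation being invoked. Your additional bookkeeping (the diagonal Jacobian $\mathrm{diag}(\zeta^j g''(\alpha^j))$, the explicit symmetric point $\alpha^j=C_\Omega/m$, and the smoothness caveats on $g$) goes beyond what the paper records but is consistent with it, including the later proposition identifying $(C_\Omega/m,\ldots,C_\Omega/m)$ as the unique symmetric pure equilibrium.
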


\section{Stability and Dynamics} \label{secdynamics}
In this section, we study the stability of equilibria and several classes of evolutionary game dynamics. We show that the evolutionary game has a unique pure constrained evolutionary stable strategy.
\begin{prop}
The collection of rates $$\alpha=\left(\frac{C_{\Omega}}{m},\ldots,\frac{C_{\Omega}}{m}\right)\,,$$ i.e the distribution of Dirac concentrated on the rate $\frac{C_{\Omega}}{m},$ is the unique  symmetric pure Nash equilibrium.
\end{prop}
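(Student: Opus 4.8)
The plan is to combine the characterization of pure Nash equilibria from Proposition~\ref{ne} with a symmetry argument. By Proposition~\ref{ne}, every pure Nash equilibrium $\alpha$ satisfies $\sum_{j=1}^m \alpha^j = C_\Omega$ together with $\alpha^j \ge r_m$ for all $j$. First I would impose the symmetry requirement: a symmetric pure profile has $\alpha^1 = \alpha^2 = \cdots = \alpha^m =: a$ for some common value $a \in [0, C_{\{j\}}]$. Substituting into the equilibrium condition $\sum_j \alpha^j = C_\Omega$ forces $ma = C_\Omega$, hence $a = C_\Omega/m$, which pins down the candidate uniquely. This is the ``only one candidate'' half of the statement.

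Next I would verify that this candidate is indeed a Nash equilibrium, i.e.\ that it lies in $Face_{\max}(\mathcal{C})$, which by Proposition~\ref{ne} only requires checking $C_\Omega/m \ge r_m$. Recall $r_m = C_{\{m\}} - C_{\{m-1\}} = \log\!\big(1+m\tfrac{P}{\sigma_0^2}\big) - \log\!\big(1+(m-1)\tfrac{P}{\sigma_0^2}\big)$ and $C_\Omega = \log\!\big(1+m\tfrac{P}{\sigma_0^2}\big)$. The inequality $\tfrac{1}{m}C_\Omega \ge r_m$ is equivalent to $\tfrac{1}{m}\log\!\big(1+m x\big) \ge \log\!\big(\tfrac{1+mx}{1+(m-1)x}\big)$ with $x = P/\sigma_0^2$; I would prove this by noting that the sequence $C_k = \log(1+kx)$ is concave and increasing in $k$, so its increments $C_k - C_{k-1}$ are decreasing, whence the largest increment $r_m = C_m - C_{m-1}$ is the smallest, and in particular $r_m \le \tfrac{1}{m}\sum_{k=1}^m (C_k - C_{k-1}) = \tfrac{1}{m}C_\Omega$. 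Since also $C_\Omega/m \le C_{\{1\}}$ (because $C_\Omega \le \sum_{j} C_{\{j\}} = m C_{\{1\}}$, an inequality already established in the excerpt), the candidate lies in $\mathcal{A}^m$, so it is a genuine symmetric pure Nash equilibrium.

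Finally I would phrase the conclusion in the ``distribution'' language used in the statement: the pure profile $(C_\Omega/m,\dots,C_\Omega/m)$ corresponds to the population state $\lambda = \delta_{C_\Omega/m}$, the Dirac mass at $C_\Omega/m$, and the argument above shows this is the unique symmetric pure equilibrium. The only mild subtlety — and the closest thing to an obstacle — is making sure the symmetry reduction is applied to the right object: one must check that no asymmetric profile can masquerade as ``symmetric'' and that the feasibility constraints beyond the top face (the subset sums $\sum_{j\in J}\alpha^j \le C_J$ for $\emptyset \subsetneq J \subsetneq \Omega$) are automatically satisfied at the candidate. For the latter, with $\alpha^j = C_\Omega/m$ one has $\sum_{j\in J}\alpha^j = \tfrac{|J|}{m}C_\Omega$, and $\tfrac{|J|}{m}C_\Omega \le C_J$ follows again from concavity of $k \mapsto C_k$ (which gives $\tfrac{C_\Omega}{m} \le \tfrac{C_J}{|J|}$), so every constraint holds and the proof is complete.
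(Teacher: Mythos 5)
Your proposal is correct and follows essentially the same route as the paper: both reduce to the characterization of pure Nash equilibria in Proposition~\ref{ne}, impose symmetry to force $ma=C_{\Omega}$, and invoke the bound $r_m\leq C_{\Omega}/m$ to conclude feasibility. The only difference is one of detail --- you actually prove the bound $r_m\leq C_{\Omega}/m$ (via concavity of $k\mapsto\log(1+kx)$ and telescoping) and verify the intermediate subset constraints $\tfrac{|J|}{m}C_{\Omega}\leq C_J$, both of which the paper merely asserts or leaves implicit.
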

\begin{proof}
Since the constrained rate allocation game is symmetric, there exists a symmetric  (pure or mixed) Nash equilibrium. If such an equilibrium exists in pure strategies, each user transmits with the same rate $r^*.$ It follows from Proposition \ref{ne}, and the bound $r_m\leq \frac{C_{\Omega}}{m}$ that $ r^*$ satisfies $m r^*=C_{\Omega}$  and $r^*$ is feasible.
\end{proof}

Since the set of feasible actions is convex, we can define convex combination of rates in the set of the feasible rates. For example, $\epsilon \alpha'+(1-\epsilon)\alpha$ is a feasible rate if $\alpha'$ and $\alpha$ are feasible. The symmetric rate profile  $(r, r,\ldots,r)$ is feasible if and only if $0\leq r\leq r^*=\frac{C_{\Omega}}{m}.$  We say that the rate $r$ is a constrained evolutionary stable strategy (ESS) if it is feasible and for every {\it mutant strategy} $mut\neq \alpha$ there exists $\epsilon_{mut}>0$ such that
$$\left\{\begin{array}{cc}
r_{\epsilon}:=\epsilon\ mut+(1-\epsilon)r\in \mathcal{C} & \forall \epsilon\in(0,\epsilon_{mut})\\
u(r,r_{\epsilon},\ldots,r_{\epsilon})>u(mut, r_{\epsilon},\ldots,r_{\epsilon}) & \forall \epsilon\in(0,\epsilon_{mut})
\end{array}\right.$$

\begin{subthm} The pure strategy $r^*=\frac{C_{\Omega}}{m}$ is a constrained evolutionary stable strategy.
\end{subthm}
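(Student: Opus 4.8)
The plan is to verify the two clauses in the definition of a constrained ESS directly, using that $r^{*}=C_{\Omega}/m$ lies exactly on the maximal face $Face_{\max}(\mathcal{C})$ while every strictly smaller symmetric rate lies well inside $\mathcal{C}$. The one nontrivial ingredient is the elementary bound $C_{\{\ell\}}=\log(1+\ell P/\sigma_{0}^{2})\ge \frac{\ell}{m}\log(1+mP/\sigma_{0}^{2})=\frac{\ell}{m}C_{\Omega}$ for $1\le\ell\le m$, which is just concavity of $t\mapsto\log(1+tP/\sigma_{0}^{2})$ through the origin (write $\ell=\frac{\ell}{m}\cdot m+(1-\frac{\ell}{m})\cdot 0$), with equality iff $\ell=m$.

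First I would settle feasibility of $r^{*}$: for any $J\subseteq\Omega$ with $|J|=\ell$, the constraint $\ell r^{*}\le C_{\{\ell\}}$ is exactly the bound above, so the symmetric profile $(r^{*},\dots,r^{*})$ is in $\mathcal{C}$, and its total rate is $C_{\Omega}$, i.e. it lies on $Face_{\max}(\mathcal{C})$. The case $\ell=m-1$ also gives $r_{m}=C_{\Omega}-C_{\{m-1\}}\le C_{\Omega}/m=r^{*}$, so by Proposition~\ref{ne} $r^{*}$ is in particular a symmetric pure Nash equilibrium; in any event it is feasible, which is the first requirement in the ESS definition.

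Next I would take an arbitrary mutant rate $mut\ne r^{*}$ and dispose of large mutants first. If $mut>r^{*}$ then $r_{\epsilon}:=\epsilon\, mut+(1-\epsilon)r^{*}>r^{*}$ for every $\epsilon\in(0,1)$, hence $m r_{\epsilon}>C_{\Omega}$ and the symmetric profile $(r_{\epsilon},\dots,r_{\epsilon})$ leaves $\mathcal{C}$ (equivalently, the perturbed population state leaves $\mathcal{M}(\mathcal{C})$); such a mutation fails the first clause of the ESS definition for every $\epsilon_{mut}>0$, so it is not admissible and it suffices to treat $mut<r^{*}$. For $mut<r^{*}$ and $\epsilon\in(0,1)$ one has $0\le mut<r_{\epsilon}<r^{*}$, and I would check feasibility of the three profiles appearing in the definition: $(r_{\epsilon},\dots,r_{\epsilon})$, whose partial sums are $\ell r_{\epsilon}<\ell r^{*}\le C_{\{\ell\}}$; $(r^{*},r_{\epsilon},\dots,r_{\epsilon})$, for which a size-$\ell$ subset containing the focal user gives $r^{*}+(\ell-1)r_{\epsilon}\le\ell r^{*}\le C_{\{\ell\}}$ and one not containing it gives $\ell r_{\epsilon}\le C_{\{\ell\}}$; and $(mut,r_{\epsilon},\dots,r_{\epsilon})$, by the same bounds with $mut$ in place of $r^{*}$. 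Since all three lie in $\mathcal{C}$, the payoffs equal $g(\cdot)$ rather than the degenerate value $0$, so that $u(r^{*},r_{\epsilon},\dots,r_{\epsilon})=g(r^{*})>g(mut)=u(mut,r_{\epsilon},\dots,r_{\epsilon})$ by strict monotonicity of $g$, for all $\epsilon\in(0,1)$. Choosing $\epsilon_{mut}=1$ then makes both clauses hold, which proves that $r^{*}$ is a constrained ESS.

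I expect the obstacles to be bookkeeping rather than conceptual: (a) pinning down the concavity inequality $C_{\{\ell\}}\ge\frac{\ell}{m}C_{\Omega}$ and reading off from it both $r_{m}\le r^{*}$ and $(r^{*},\dots,r^{*})\in\mathcal{C}$; (b) checking that the two off-diagonal profiles $(r^{*},r_{\epsilon},\dots,r_{\epsilon})$ and $(mut,r_{\epsilon},\dots,r_{\epsilon})$ stay in $\mathcal{C}$, so the comparison is $g(r^{*})$ versus $g(mut)$ and not a $0$-versus-$0$ tie; and (c) handling mutants with $mut>r^{*}$, which I resolve by observing that the feasibility clause of the constrained-ESS definition rules them out. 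With these in hand, strict increase of $g$ closes the argument; and since any pure ESS must in particular be a symmetric Nash equilibrium, the Proposition above additionally pins down $r^{*}$ as the unique pure constrained ESS.
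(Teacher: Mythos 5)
Your proposal is correct and follows essentially the same route as the paper: mutants with $mut>r^{*}$ are excluded because the perturbed symmetric state violates the capacity constraint, and for $mut<r^{*}$ the strict monotonicity of $g$ gives $u(r^{*},r_{\epsilon},\ldots,r_{\epsilon})>u(mut,r_{\epsilon},\ldots,r_{\epsilon})$. You are in fact more careful than the paper in two places it leaves implicit --- the concavity bound $C_{\{\ell\}}\ge\frac{\ell}{m}C_{\Omega}$ establishing feasibility of $(r^{*},\ldots,r^{*})$ and $r_{m}\le r^{*}$, and the check that the off-diagonal profiles $(r^{*},r_{\epsilon},\ldots,r_{\epsilon})$ and $(mut,r_{\epsilon},\ldots,r_{\epsilon})$ stay in $\mathcal{C}$ so the comparison is genuinely $g(r^{*})$ versus $g(mut)$ and not $0$ versus $0$.
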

\begin{proof} Let $mut\leq r^*$
The rate $\epsilon\ mut+(1-\epsilon)r^*$ is feasible implies that $mut\leq r^*$ (because $r^*$ is the maximum symmetric rate achievable). Since $mut\neq r^*,$ $mut$ is strictly lower than $r^*.$ By monotonicity of the function $g,$ one has $$ u(r^*,\epsilon\ mut+(1-\epsilon)r^*)>u(mut,\epsilon\ mut+(1-\epsilon)r^*),\ \forall \epsilon.$$ This completes the proof.
\end{proof}

\subsection{Symmetric Mixed Strategies}
Define the mixed capacity region $\mathcal{M}(\mathcal{C})$ as the set of measures profile $(\mu^1,\mu^2,\ldots,\mu^m)$ such that $$\int_{\mathbb{R}_{+}^{|J|}}\left(\sum_{j\in J}\alpha^j\right)\prod_{j\in J}\mu^j(d\alpha^j) \leq C_{J},\ \forall J\subseteq 2^{\Omega}.$$
Then the payoff of the action $a\in\mathbb{R}_{+}$ satisfying $(a,\lambda,\ldots,\lambda)\in \mathcal{M}(\mathcal{C})$  can be defined as $$ F(a,\mu)=\int_{[0,\infty[^{m-1}} u(a,b_2,\ldots, b_m)\ \nu_{m-1}(db)\,,$$ where $\nu_k=\bigotimes_{1}^{k} \mu$ is the product measure on $[0,\infty[^{k}.$
The constraint set becomes the set of probability measures on $\mathcal{R}_+$ such that  $$ 0\leq \mathbb{E}(\mu):=\int_{\mathbb{R}_+}\ \alpha^j\ \mu(d\alpha^j)\leq \frac{C_{\Omega}}{m}<C_{\{1\}}\,.$$

\begin{lem}
 $F(a,\mu)=\upharpoonleft_{[0,C_{\Omega}-(m-1)\mathbb{E}(\mu)]} \times g(a)\times$ $$ \int_{b\in \mathcal{D}_a} \ \nu_{m-1}(db)=\upharpoonleft_{[0,C_{\Omega}-(m-1)\mathbb{E}(\mu)]}\times g(a)\nu_{m-1}(\mathcal{D}_a)$$
 \newline where $$ \mathcal{D}_a=\{(b_2,\ldots, b_m)\ |\  (a,b_2,\ldots, b_m)\in \mathcal{C} \}\,.$$
\end{lem}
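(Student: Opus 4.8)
The plan is to expand the defining integral $F(a,\mu)=\int_{[0,\infty[^{m-1}}u(a,b_2,\dots,b_m)\,\nu_{m-1}(db)$ and peel off the three advertised factors; everything except the leading indicator is a one-line change of notation. Recall that by definition $u(a,b)=g(a)$ if $(a,b)\in\mathcal{C}$ and $u(a,b)=0$ otherwise, where $b=(b_2,\dots,b_m)$; equivalently $u(a,b)=g(a)\,\upharpoonleft_{\mathcal{C}}(a,b)$. For a fixed rate $a$, the set $\mathcal{D}_a$ was introduced exactly so that $(a,b)\in\mathcal{C}\iff b\in\mathcal{D}_a$, hence $\upharpoonleft_{\mathcal{C}}(a,\cdot)=\upharpoonleft_{\mathcal{D}_a}(\cdot)$ on $[0,\infty[^{m-1}$; and $\mathcal{D}_a$ is the slice of the polytope $\mathcal{C}$ at $\alpha^1=a$, hence a bounded polyhedron and in particular a Borel set, so $\nu_{m-1}(\mathcal{D}_a)$ is well defined. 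Substituting $u(a,b)=g(a)\,\upharpoonleft_{\mathcal{D}_a}(b)$ and pulling the constant $g(a)$ out of the integral gives
$$F(a,\mu)=g(a)\int_{[0,\infty[^{m-1}}\upharpoonleft_{\mathcal{D}_a}(b)\,\nu_{m-1}(db)=g(a)\int_{b\in\mathcal{D}_a}\nu_{m-1}(db)=g(a)\,\nu_{m-1}(\mathcal{D}_a),$$
which already yields the equality of the last two displayed expressions in the statement.

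Next I would account for the factor $\upharpoonleft_{[0,\,C_\Omega-(m-1)\E(\mu)]}$. Here one uses that $F(a,\mu)$ is the payoff attached to the rate $a$ only when the profile $(a,\mu,\dots,\mu)$ — the first sender transmitting at the deterministic rate $a$, the other $m-1$ senders using the state $\mu$ — lies in the mixed capacity region $\mathcal{M}(\mathcal{C})$, and that otherwise, the mixed capacity constraint being violated, the payoff is $0$. Taking $J=\Omega$ in the definition of $\mathcal{M}(\mathcal{C})$ gives $a+(m-1)\E(\mu)\le C_\Omega$; so for $a\notin[0,\,C_\Omega-(m-1)\E(\mu)]$ one gets $F(a,\mu)=0$, matching the vanishing of the indicator, while for $a$ inside this interval the indicator equals $1$ and the claim reduces to the identity of the previous paragraph.

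The point that needs care — and which I expect to be the only genuine obstacle — is that $[0,\,C_\Omega-(m-1)\E(\mu)]$ may be strictly larger than the exact set of $\mathcal{M}(\mathcal{C})$-feasible rates, namely $[0,\ \min_{1\le k\le m}(c_k-(k-1)\E(\mu))]$ where $c_k:=\log(1+k\frac{P}{\sigma_0^2})$: the constraints indexed by $J\not\ni 1$ hold automatically because $\E(\mu)\le C_\Omega/m$ and $k\mapsto c_k/k$ is decreasing in $k$. To see that the formula is still valid on the possible gap $\bigl(\min_{k}(c_k-(k-1)\E(\mu)),\,C_\Omega-(m-1)\E(\mu)\bigr]$, note that $k\mapsto c_k-(k-1)\E(\mu)$ is concave on $\{1,\dots,m\}$ (its increments $c_{k+1}-c_k-\E(\mu)$ are decreasing in $k$), so its minimum over $\{1,\dots,m\}$ is attained at $k=1$ or $k=m$. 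If it is attained at $k=m$ the gap is empty; if at $k=1$, then every $a$ in the gap satisfies $a>c_1=C_{\{1\}}$, so the lone inequality $\alpha^1\le C_{\{1\}}$ is violated for every $b$, i.e. $\mathcal{D}_a=\emptyset$ and $\nu_{m-1}(\mathcal{D}_a)=0$, so the right-hand side of the claimed formula is $0$ there as well. Putting the three paragraphs together, the displayed identity holds for every $a\in\R_+$, while the core computation $F(a,\mu)=g(a)\,\nu_{m-1}(\mathcal{D}_a)$ is immediate from unpacking the payoff.
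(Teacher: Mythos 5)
Your proof follows essentially the same route as the paper's: unpack $u(a,b)=g(a)\upharpoonleft_{\mathcal{C}}(a,b)$, identify $\{b:(a,b)\in\mathcal{C}\}$ with the slice $\mathcal{D}_a$, and pull the constant $g(a)$ out of the integral to get $g(a)\,\nu_{m-1}(\mathcal{D}_a)$. The only difference is that your third paragraph carefully reconciles the prefactor $\upharpoonleft_{[0,\,C_{\Omega}-(m-1)\E(\mu)]}$ with the exact set of $\mathcal{M}(\mathcal{C})$-feasible rates (showing the formula still vanishes on the possible gap because $\mathcal{D}_a=\emptyset$ there), a point the paper asserts without comment.
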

\begin{proof} If the rate does not satisfy the capacity constraints, then the payoff is $0.$ Hence the {\it rational} rate for user $j$ is lower than $C_{\{j\}}.$ Fix a rate $a\in[0,C_{\{j\}}].$
Let $D^a_J :=C_J-a\delta_{\{1\in J\}}.$ Then, a necessary condition to have a non-zero payoff is  $$(b_2,\ldots, b_m) \in \mathcal{D}_a\,,$$ where $$\mathcal{D}_a=\{(b_2,\ldots, b_m)\in \mathbb{R}_{+}^{m-1},\ \sum_{j\in J,j\neq 1}b_j\leq D^a_J,\ J\subseteq 2^{\Omega} \}.$$
Thus,
\begin{eqnarray} \nonumber F(a,\mu)&=&\int_{\mathbb{R}_{+}^{m-1}} u(a,b_2,\ldots, b_m)\ \nu_{m-1}(db)\\ \nonumber & =&\int_{b\in \mathbb{R}_{+}^{m-1},\ (a,b)\in\mathcal{C}} g(a)\ \nu_{m-1}(db)\\
\nonumber &=& \upharpoonleft_{[0,C_{\Omega}-(m-1)\mathbb{E}(\mu)]} g(a)\\
\nonumber&&\times \int_{b\in \mathcal{D}_a} \ \nu_{m-1}(db)
\end{eqnarray}
\end{proof}

   \subsection{Constrained Evolutionary Game Dynamics}
   The class of evolutionary games in large population provides a simple
framework for describing strategic interactions among large numbers of users.  In this subsection we  turn to modeling the behavior of the users who play them.
Traditionally, predictions of behavior in game theory are based on some notion of equilibrium,
typically Cournot equilibrium, Bertrand equilibrium, Nash equilibrium, Stackelberg solution, Wardrop equilibrium or some refinement thereof. These notions require the assumption of equilibrium knowledge, which posits that each user correctly anticipates
how his opponents will act. The equilibrium knowledge assumption is too strong and is difficult
to justify in particular in contexts with large numbers of users.
As an alternative to the equilibrium approach, we propose an explicitly dynamic
 updating choice, a procedure in which users myopically update their behavior in response to
their current strategic environment. This dynamic procedure does not assume the automatic
coordination of users' actions and beliefs, and it can derive many specifications of users'
choice procedures.
These procedures are specified formally by defining a revision of rates called {\it revision
protocol} \cite{sandholmbook}.  A revision
protocol takes current payoffs  and current mean rate and maps to conditional switch rates which describe how frequently users in some class playing rate $\alpha$  who
are considering switching rates switch to strategy $\alpha'.$ Revision protocols are flexible enough to
incorporate a wide variety of paradigms, including ones based on imitation, adaptation, learning, optimization, etc.

 We  use a class of continuous evolutionary dynamics.  We refer to \cite{wiopt,gamecomm,mass} for evolutionary game dynamics with or  without time delays. The continuous-time evolutionary game
dynamics on the measure space $(\mathcal{A}, \mathcal{B}(\mathcal{A}),\mu)$ is  given by
 \begin{equation} \dot{\lambda}_t(E)= \int_{a\in E}V(a,\lambda_t) \mu(da) \end{equation}
where $$V(a,\lambda_t)=K\left[\int_{x\in \mathcal{A}}\beta^x_a(\lambda_t)\lambda_t(dx)-\int_{x\in \mathcal{A}}\beta^a_x(\lambda_t) \lambda_t(dx)\right],$$ and $\beta^x_a$ represents the rate of mutation from $x$ to $a,$ and $K$ is a growth parameter.  $\beta^x_a(\lambda_t)=0$ if $(x,\lambda_t)$ or $(a,\lambda_t)$ is not in the (mixed) capacity region, $E$ is a $\mu-$measurable subset of $ \mathcal{A}.$
At each time $t,$ probability measure $\lambda_t$ satisfies
$\frac{d}{dt}\lambda_t(\mathcal{A})=0$.
\medskip

\noindent{\bf Constrained Brown-von Neumann-Nash dynamics.}
\newline The constrained revision protocol is $$\beta^x_a(\lambda_t)=\left\{\begin{array}{c} \max(F(a,\lambda_t)-\int_x F(x,\lambda_t)\ dx, 0) \\ \mbox{if}\ (a,\lambda_t),\ (x,\lambda_t)\in \mathcal{M}(\mathcal{C}) \\ 0 \ \mbox{otherwise}
\end{array}\right.$$
%
%
\noindent{\bf Constrained Replicator Dynamics.}
$$\beta^x_a(\lambda_t)=\left\{\begin{array}{c} \max(F(a,\lambda_t)-F(x,\lambda_t), 0) \\ \mbox{if}\ (a,\lambda_t),\ (x,\lambda_t)\in \mathcal{M}(\mathcal{C})\\ 0 \  \mbox{otherwise}\end{array}\right.$$
\noindent{\bf Constrained $\theta-$Smith Dynamics.}
$$ \beta^x_a(\lambda_t)=\left\{\begin{array}{cc} \max(F(a,\lambda_t)-F(x,\lambda_t), 0)^{\theta} \\ \mbox{if}\ (a,\lambda_t),\ (x,\lambda_t)\in \mathcal{M}(\mathcal{C})\\ 0 \ \mbox{otherwise}\end{array}\right.,\ \theta\geq 1$$
\medskip

We now provide a common property that applies to all these dynamics:  the set of Nash equilibria is a subset of rest points (stationary points) of the evolutionary game dynamics. Here we extend to evolutionary game with a continuous action space and coupled constraints, and more than two-users interactions. The counterparts of these results in discrete action space can be found in \cite{hofbauer,sandholmbook}.

\begin{subthm} Any  Nash equilibrium of the game is a rest point of the following evolutionary game dynamics: constrained Brown-von Neumann-Nash, generalized Smith dynamics, and replicator dynamics. In particular, the evolutionary stable strategies set  is a subset of the rest points of these constrained evolutionary game dynamics.
\end{subthm}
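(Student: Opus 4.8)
The plan is to read off what ``rest point'' means from the revision--protocol structure of the dynamics and feed in the variational description of the equilibria. Writing $\dot\lambda_t(E)=\int_{a\in E}V(a,\lambda_t)\,\mu(da)$ in its inflow--minus--outflow form, $\lambda^*$ is stationary exactly when, for every target rate $a$, the inflow $\int_{\mathcal A}\beta^x_a(\lambda^*)\,\lambda^*(dx)$ vanishes and, on $\operatorname{supp}(\lambda^*)$, the outflow $\int_{\mathcal A}\beta^a_x(\lambda^*)\,\lambda^*(dx)$ vanishes as well (the outflow from a rate is carried by the mass currently playing it, so only the support is relevant). The single input I would use is the Nash characterization in the mixed extension: since the one--population game is symmetric and $\lambda^*$ is feasible, $(a,\lambda^*,\dots,\lambda^*)\in\mathcal M(\mathcal C)$ for every $a\in\operatorname{supp}(\lambda^*)$, the equilibrium value $v:=\int_{\mathcal A}F(x,\lambda^*)\,\lambda^*(dx)$ equals $\sup\{F(a,\lambda^*):(a,\lambda^*,\dots,\lambda^*)\in\mathcal M(\mathcal C)\}$, and $F(x,\lambda^*)=v$ for $\lambda^*$-a.e.\ $x$. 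Equivalently, $F(a,\lambda^*)\le v$ for every feasible $a$, with equality on the support --- which is precisely how $\overline{BR}$ of Lemma~\ref{lembr} and the equilibrium set of Proposition~\ref{ne} manifest themselves here.

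With this in hand the verification is protocol by protocol and is essentially a sign check (these are all excess--payoff, pairwise--comparison, or imitative target protocols, so they enjoy the ``Nash stationarity'' property). For the constrained Brown--von Neumann--Nash protocol, $\beta^x_a(\lambda^*)=(F(a,\lambda^*)-v)_+$ does not depend on $x$ and is $0$ for every feasible $a$ (and $0$ by construction otherwise), so the inflow $\int_{\mathcal A}\beta^x_a(\lambda^*)\lambda^*(dx)=(F(a,\lambda^*)-v)_+=0$, while the outflow $\int_{\mathcal A}\beta^a_x(\lambda^*)\lambda^*(dx)=\int_{\mathcal A}(F(x,\lambda^*)-v)_+\lambda^*(dx)=0$ because $F(\cdot,\lambda^*)=v$ $\lambda^*$-a.e.; hence $V(\cdot,\lambda^*)\equiv 0$. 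For the constrained replicator and $\theta$--Smith protocols, $\beta^x_a(\lambda^*)=(F(a,\lambda^*)-F(x,\lambda^*))_+$ (resp.\ its $\theta$-th power): integrating the inflow against $\lambda^*$ only sees $x$ with $F(x,\lambda^*)=v\ge F(a,\lambda^*)$, so each integrand is $0$ and the inflow vanishes for every $a$; and for $a\in\operatorname{supp}(\lambda^*)$ one has $F(a,\lambda^*)=v=F(x,\lambda^*)$ for $\lambda^*$-a.e.\ $x$, so the outflow integrand is $0$ there, while off the support the outflow is weighted by mass that $\lambda^*$ does not assign. In every case no net transport of mass occurs, so $\lambda^*$ is a rest point.

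The ESS statement then follows with no extra work: letting $\epsilon\downarrow 0$ in the defining inequality $u(r,r_\epsilon,\dots,r_\epsilon)>u(mut,r_\epsilon,\dots,r_\epsilon)$ shows every constrained ESS is a symmetric Nash equilibrium (and by the Proposition $r^*=C_\Omega/m$ is such a one), so the ESS set lies inside the Nash set, which by the preceding paragraphs lies inside the rest points of each of the three dynamics. I expect the real obstacle to be the bookkeeping in the first step --- making the notion of rest point precise in this continuous--action, coupled--constraint setting and checking that the feasibility restrictions built into each $\beta^x_a$ leave the cancellation intact: in particular that the outflow term is carried by $\lambda^*$ (so suboptimal or infeasible rates off the support contribute nothing) and that the inflow vanishes identically in $a$ rather than merely almost everywhere. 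Given the closed form of $F$ established earlier in this section, the payoff computations themselves are routine.
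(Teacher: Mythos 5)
Your proposal is correct and follows essentially the same route as the paper's proof: at a Nash equilibrium $F(\cdot,\lambda^*)$ equals its maximal value $v$ on the support and is no larger at any feasible rate, so every conditional switch rate $\beta^x_a(\lambda^*)$ that carries mass vanishes and $V(\cdot,\lambda^*)=0$. Your version is simply a more careful, protocol-by-protocol elaboration of the paper's three-sentence argument (and, unlike the paper, it also spells out the ESS $\subseteq$ Nash $\subseteq$ rest-points chain and flags the off-support outflow bookkeeping that the paper glosses over).
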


\begin{proof}
It is clear for pure equilibria by using the revision protocols $\beta$ of these dynamics. Let $\lambda$ be an equilibrium. For any rate $a$ in the support of $\lambda,$ $\beta^a_x=0$ if $F(x,\lambda)\leq F(a,\lambda).$ Thus, if $\lambda$ is an equilibrium the difference between the microscopic inflow and outflow is $V(a,\lambda)=0$, given that $a$ is the support of the measure $\lambda.$
\end{proof}

Let $\lambda$ be a  finite Borel measure on $[0,C_{\{j\}}]$ with full support. Suppose  $g$ is continuous on $[0,C_{\{j\}}].$ Then,
$\lambda$ is a rest point of the BNN dynamics if and only if $\lambda$ is a symmetric
Nash equilibrium.
Note that the choice of topology is an important issue when defining dynamics
convergence and stability. The most used in this area is the topology of the weak
convergence to measure closeness of two states of the system. Different
distances (Prohorov metric, metric on bounded and Lipschitz continuous
functions on $\mathcal{A}$) have been proposed. We refer the reader
to \cite{pierre}, and the references therein for more
details on {\it evolutionary robust strategy} and stability notions.

\section{Generalization} In this section, we consider the asymmetric case. Each user has its maximum power $P_i$ and a channel gain $h_i.$ In addition, the rate of transmission is subject to a coupled capacity constraint. The capacity region $\mathcal{C}$  is described by the set
\begin{equation}\label{capacityregion}
\left\{\alpha\in \mathbb{R}^{m}_{+}, \sum_{i\in \Omega}\alpha^i\leq  C_\Omega,\ \forall\ \emptyset \subset \Omega \subseteq \mathcal{N}\right\},
\end{equation}
where $\Omega$ is any subset of $\mathcal{N}$ and  \begin{equation}\label{OmegaCapacity} C_\Omega =\log\left(1+\sum_{i\in \Omega}\frac{P_i h_i}{\sigma^2_0}\right),\end{equation}
is the capacity for users in $\Omega$. The capacity region reveals a competitive nature of the interactions among senders: if a user $i$ wants to communicate at a higher rate, one of the other users  has to lower his rate;  otherwise, the capacity constraint is violated.  We let $$r_{i,\Omega} :=\log\left(1+\frac{P_ih_i}{\sigma_0^2+\sum_{i'\in \Omega,i'\neq i} P_{i'}h_{i'}}\right)$$  denote the  bound rate  of a user when the signals of the $|\Omega|-1$ other users are treated as  noise.

Due to the noncooperative nature of the rate allocation, we can formulate the one-shot game $$\Xi =\langle \mathcal{N}, (\mathcal{A}^i)_{i\in\mathcal{N}}, (u^i)_{i\in\mathcal{N}}\rangle\,,$$ where the set of users $\mathcal{N}$ is the set of players, $\mathcal{A}^i$, $i\in  \mathcal{N}$, is  the set of actions, and $u^i$, $i\in  \mathcal{N}$, are the payoff functions.
We define $u^i:\prod_{i=1}^m\mathcal{A}^i\rightarrow \mathbb{R}_+$ as follows.
\begin{eqnarray}
u^i(\alpha^{i},\alpha^{-i})&=&\upharpoonleft_{\mathcal{C}}(\alpha)g^i(\alpha^i,\alpha^{-i})\\ &=&\left\{ \begin{array}{ll} g^i(\alpha^i) & {\textrm{~if~}}\   (\alpha^{i},\alpha^{-i})\in \mathcal{C}\\
0 & \mbox{otherwise}
\end{array}
\right.,
\end{eqnarray}
 where  $\upharpoonleft_{\mathcal{C}}$ is the indicator function; $\alpha^{-i}$ is a vector consisting of other players' rates, i.e., $\alpha^{-i}=[\alpha^1,\ldots,\alpha^{i-1},\alpha^{i+1},\ldots,\alpha^N]$  and $u^i$
 is a positive and strictly increasing  function for each fixed $\alpha^{-i}$. Since the game is subject to coupled constraints, the action set $\mathcal{A}^i$ is coupled and dependent on other players' actions. Given the strategy profile $\alpha^{-i}$ of other players, the constrained action set $\mathcal{A}^i$ is given by
 \begin{equation}
 \mathcal{A}^i(\alpha^{-i}):=\{\alpha^i\in [0,C_{\{i\}}],\ (\alpha^i,\alpha^{-i})\in \mathcal{C} \}
 \end{equation}  We then have an asymmetric game. The minimum rate that the user $i$ can guarantee in the feasible regions is $r_{i,\mathcal{N}}$ which is different than $r_{j,\mathcal{N}}.$

 Each user $i$ maximizes $u^i(\alpha^{i},\alpha^{-i})$ over the coupled constraint set.
Owing to the monotonicity of the function $g^i$ and the inequalities that define the capacity region, we obtain the following lemma.
\begin{lem}
 Let $\overline{BR}^i(\alpha^{-i})$ be the best reply to the strategy $\alpha^{-i}$, defined by  $$ \overline{BR}^i(\alpha^{-i})=\arg\max_{y\in \mathcal{A}^i(\alpha^{-i})}u^i(y,\alpha^{-i}).$$ $\overline{BR}^i$ is a non-empty single-valued correspondence (i.e a  standard function), and is given by
\begin{equation}
\max \left(r_{i,\mathcal{N}},\min_{\Omega\in\Gamma_i}\left\{C_{\Omega}-\sum_{k\in\Omega\backslash\{i\}}\alpha^k\right\} \right),
\end{equation} \label{lembr1}
where $\Gamma_i=\{\Omega\in 2^{\mathcal{N}}, i\in\Omega\}$.
\end{lem}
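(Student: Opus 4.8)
The plan is to follow the same route as Lemma~\ref{lembr}: for a fixed $\alpha^{-i}$, reduce user $i$'s problem to maximizing a strictly increasing scalar function over a compact interval, and then identify that interval from the capacity inequalities. First, on the constrained action set $\mathcal{A}^i(\alpha^{-i})$ every profile $(\alpha^i,\alpha^{-i})$ lies in $\mathcal{C}$, so there $u^i(\alpha^i,\alpha^{-i})=g^i(\alpha^i)$; since $g^i$ is strictly increasing in $\alpha^i$, the argmax is simply the largest element of $\mathcal{A}^i(\alpha^{-i})$ (positivity of $g^i$ is what makes restricting to $\mathcal{A}^i(\alpha^{-i})$ legitimate in the first place, since any feasible rate then beats any infeasible one). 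So it suffices to show that $\mathcal{A}^i(\alpha^{-i})$ is a non-empty compact interval with left endpoint $0$ and to identify its right endpoint.

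Next I would unwind $\mathcal{A}^i(\alpha^{-i})$ from the defining inequalities (\ref{capacityregion}). A constraint indexed by a set $\Omega\subseteq\mathcal{N}$ bounds $\alpha^i$ only when $i\in\Omega$, i.e.\ $\Omega\in\Gamma_i$, in which case it reads $\alpha^i\le C_\Omega-\sum_{k\in\Omega\setminus\{i\}}\alpha^k$ (the box bound $\alpha^i\le C_{\{i\}}$ being the instance $\Omega=\{i\}$), while the constraints with $i\notin\Omega$ involve only $\alpha^{-i}$. Hence, \emph{provided} $\alpha^{-i}$ satisfies the capacity constraints among $\mathcal{N}\setminus\{i\}$ — which is exactly what makes $\mathcal{A}^i(\alpha^{-i})$ non-empty, so that the best reply is well defined — we get
\[
\mathcal{A}^i(\alpha^{-i})=\Big[\,0,\ \rho_i\,\Big],\qquad \rho_i:=\min_{\Omega\in\Gamma_i}\Big(C_\Omega-\sum_{k\in\Omega\setminus\{i\}}\alpha^k\Big).
\]
This interval is non-degenerate: monotonicity of $\Omega\mapsto C_\Omega$ together with the feasibility of $\alpha^{-i}$ gives $\sum_{k\in\Omega\setminus\{i\}}\alpha^k\le C_{\Omega\setminus\{i\}}\le C_\Omega$, so $\rho_i\ge 0$. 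Since $g^i$ is strictly increasing, $\overline{BR}^i(\alpha^{-i})=\{\rho_i\}$, a singleton; this already gives that $\overline{BR}^i$ is a non-empty single-valued correspondence.

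It remains to reconcile $\rho_i$ with the outer $\max$ involving $r_{i,\mathcal{N}}$. The point is that $\rho_i\ge r_{i,\mathcal{N}}$ automatically: for every $\Omega\in\Gamma_i$, feasibility of $\alpha^{-i}$ and the telescoping identity $C_\Omega-C_{\Omega\setminus\{i\}}=\log\big(1+\tfrac{P_ih_i}{\sigma_0^2+\sum_{i'\in\Omega,\,i'\neq i}P_{i'}h_{i'}}\big)=r_{i,\Omega}$ give
\[
C_\Omega-\sum_{k\in\Omega\setminus\{i\}}\alpha^k\ \ge\ C_\Omega-C_{\Omega\setminus\{i\}}\ =\ r_{i,\Omega}\ \ge\ r_{i,\mathcal{N}},
\]
the last inequality because enlarging $\Omega\setminus\{i\}$ to $\mathcal{N}\setminus\{i\}$ only increases the interference sum in the denominator of $r_{i,\cdot}$. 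Taking the minimum over $\Omega\in\Gamma_i$ preserves the bound, so $\rho_i=\max\big(r_{i,\mathcal{N}},\rho_i\big)$, which is the claimed expression; the outer $\max$ also encodes the fall-back to the guaranteed rate $r_{i,\mathcal{N}}$ on degenerate $\alpha^{-i}$ for which $\mathcal{A}^i(\alpha^{-i})$ would be empty.

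The only genuinely substantive step is the last display — it is where the nested (totally unimodular) structure of the capacity constraints, via the identity $C_\Omega-C_{\Omega\setminus\{i\}}=r_{i,\Omega}$, and the feasibility of $\alpha^{-i}$ actually enter. The rest is the routine fact that a strictly increasing payoff on a compact interval is maximized at its right endpoint, so I do not expect any real obstacle beyond bookkeeping: tracking which capacity inequalities bind $\alpha^i$, and keeping $\alpha^{-i}$ in the feasible region so that the interval is non-empty.
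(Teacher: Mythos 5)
Your proof is correct and follows exactly the route the paper merely gestures at (the lemma is stated there with only the remark that it follows from the monotonicity of $g^i$ and the capacity inequalities): reduce to the right endpoint of the interval $\mathcal{A}^i(\alpha^{-i})$ and verify $\rho_i\geq r_{i,\mathcal{N}}$ via the identity $C_{\Omega}-C_{\Omega\setminus\{i\}}=r_{i,\Omega}$. Your additional observation that the outer $\max$ is inactive whenever $\alpha^{-i}$ is jointly feasible, and only serves as a fall-back convention otherwise, is a correct sharpening that the paper leaves implicit.
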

\begin{prop}
 The set of Nash equilibria is $$\{ (\alpha^i,\alpha^{-i})\ |\  \alpha^{i}\geq r_{i,\mathcal{N}},\sum_{i\in\mathcal{N}}\alpha^{i}=C_{\mathcal{N}}\}.$$
All these equilibria are optimal in Pareto sense.
\end{prop}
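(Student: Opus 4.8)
The plan is to proceed exactly as in the proof of Proposition~\ref{ne}, with the symmetric quantities replaced by their asymmetric analogues and Lemma~\ref{lembr1} (the best-reply characterisation) as the main engine, but being a bit more careful about feasibility. The one elementary identity I would isolate first is $r_{i,\mathcal{N}} = C_{\mathcal{N}} - C_{\mathcal{N}\setminus\{i\}}$, which is immediate from $C_{\Omega}=\log\big(1+\sum_{k\in\Omega}P_kh_k/\sigma_0^2\big)$. It shows that the set in the statement coincides with the maximal face $\{\alpha\in\mathcal{C}\ :\ \sum_{i\in\mathcal{N}}\alpha^i = C_{\mathcal{N}}\}$ of $\mathcal{C}$: on that face, feasibility gives $\sum_{k\neq i}\alpha^k\leq C_{\mathcal{N}\setminus\{i\}}$, hence $\alpha^i = C_{\mathcal{N}}-\sum_{k\neq i}\alpha^k\geq r_{i,\mathcal{N}}$ automatically. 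So it suffices to prove the two inclusions between the set of Nash equilibria and this maximal face, and then Pareto optimality.

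For the inclusion ``face $\subseteq$ equilibria'', I would fix $\alpha\in\mathcal{C}$ with $\sum_i\alpha^i = C_{\mathcal{N}}$ and a user $i$, and read off $\overline{BR}^i(\alpha^{-i})$ from Lemma~\ref{lembr1}. The term of the inner minimum corresponding to $\Omega=\mathcal{N}$ equals $C_{\mathcal{N}}-\sum_{k\neq i}\alpha^k = \alpha^i$; every other term $C_{\Omega}-\sum_{k\in\Omega\setminus\{i\}}\alpha^k$, $\Omega\in\Gamma_i$, is $\geq\alpha^i$ precisely because $\sum_{k\in\Omega}\alpha^k\leq C_{\Omega}$ (feasibility of $\alpha$). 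Hence the inner minimum is $\alpha^i$, and since $\alpha^i\geq r_{i,\mathcal{N}}$ the outer maximum is also $\alpha^i$, so $\overline{BR}^i(\alpha^{-i})=\{\alpha^i\}$. As this holds for every $i$, $\alpha$ is a Nash equilibrium, and in fact a strict one since $\overline{BR}^i$ is single-valued.

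For the reverse inclusion, let $\beta$ be a Nash equilibrium. Feasibility gives $\sum_i\beta^i\leq C_{\mathcal{N}}$, and I must rule out strict inequality. If $\sum_i\beta^i<C_{\mathcal{N}}$, then for each $i$ the $\Omega=\mathcal{N}$ term $C_{\mathcal{N}}-\sum_{k\neq i}\beta^k$ is strictly larger than $\beta^i=\overline{BR}^i(\beta^{-i})$, so the inner minimum in Lemma~\ref{lembr1} is attained at some $\Omega_i\in\Gamma_i$ with $\Omega_i\neq\mathcal{N}$ and $C_{\Omega_i}-\sum_{k\in\Omega_i\setminus\{i\}}\beta^k\leq\beta^i$, i.e.\ $\sum_{k\in\Omega_i}\beta^k\geq C_{\Omega_i}$; with feasibility this makes $\Omega_i$ a tight constraint with $i\in\Omega_i\subsetneq\mathcal{N}$. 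Thus every user lies in some proper tight constraint. The crucial step is then that the family of tight constraints at $\beta$ is closed under unions: since $\Omega\mapsto C_{\Omega}$ is submodular (here concavity of $x\mapsto\log(1+x)$ enters), for tight $\Omega,\Omega'$ one gets $\sum_{k\in\Omega\cup\Omega'}\beta^k+\sum_{k\in\Omega\cap\Omega'}\beta^k = C_{\Omega}+C_{\Omega'}\geq C_{\Omega\cup\Omega'}+C_{\Omega\cap\Omega'}$, while feasibility forces the reverse, so $\Omega\cup\Omega'$ is tight. Iterating, the union of all tight constraints is tight; it contains every user, hence equals $\mathcal{N}$, i.e.\ $\sum_i\beta^i=C_{\mathcal{N}}$ — a contradiction. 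So every equilibrium lies on the maximal face.

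Pareto optimality is then immediate: if $\beta\in\mathcal{C}$ satisfied $g^i(\beta^i)\geq g^i(\alpha^i)$ for all $i$ with strict inequality for some $i_0$, monotonicity of each $g^i$ would give $\beta^i\geq\alpha^i$ for all $i$ and $\beta^{i_0}>\alpha^{i_0}$, whence $\sum_i\beta^i>\sum_i\alpha^i=C_{\mathcal{N}}$, impossible in $\mathcal{C}$. The step I expect to be the main obstacle is excluding sub-maximal-sum equilibria in the reverse inclusion: passing from ``every user belongs to a proper tight constraint'' to ``$\mathcal{N}$ is tight'' is not purely combinatorial, and the cleanest route I know is the union-closure of tight constraints, which rests on submodularity of $\Omega\mapsto C_{\Omega}$ — the only ingredient here beyond Lemma~\ref{lembr1} and the identity $r_{i,\mathcal{N}}=C_{\mathcal{N}}-C_{\mathcal{N}\setminus\{i\}}$.
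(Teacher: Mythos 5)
Your proof is correct and follows the paper's skeleton --- Lemma~\ref{lembr1} gives $\overline{BR}^i(\alpha^{-i})=\{\alpha^i\}$ on the maximal face, and Pareto optimality falls out of the sum constraint --- but it supplies a genuine argument where the paper has only an assertion. The paper disposes of the reverse inclusion with the single sentence that if $\sum_i\beta^i<C_{\mathcal{N}}$ then ``at least one of the users can improve its rate,'' which is not obvious: a user with slack in the full-sum constraint could a priori still be blocked by a tight constraint $\sum_{k\in\Omega}\beta^k=C_{\Omega}$ for some proper $\Omega\ni i$. Your argument closes exactly this gap: at a putative equilibrium off the maximal face, the best-reply formula places every user inside a proper tight constraint, and submodularity of $\Omega\mapsto C_{\Omega}$ (concavity of $\log(1+\cdot)$ composed with the additive weights $P_kh_k/\sigma_0^2$) makes the family of tight sets closed under unions, so the union of all tight sets is a tight set containing every user, i.e.\ $\mathcal{N}$ itself --- a contradiction. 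This is the standard polymatroid/base-polytope argument and is, in my view, the right way to substantiate the paper's claim. Your identity $r_{i,\mathcal{N}}=C_{\mathcal{N}}-C_{\mathcal{N}\setminus\{i\}}$ and the observation that $\alpha^i\geq r_{i,\mathcal{N}}$ is automatic on the maximal face are also correct and clarify the statement; the only loose end is the other containment needed for the literal set equality (that every point of the displayed set lies in $\mathcal{C}$), which follows from the same submodularity via $C_{\mathcal{N}}-C_{\Omega}\geq\sum_{k\notin\Omega}r_{k,\mathcal{N}}$ for every $\Omega\subsetneq\mathcal{N}$, and which the paper likewise leaves implicit.
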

\begin{proof}
Let $\beta$ be a feasible solution, i.e., $\beta\in\mathcal{C}.$ If $$\sum_{i=1}^N \beta^i< C_{\mathcal{N}}=\log\left(1+\sum_{i\in \mathcal{N}} \frac{P_ih_i}{\sigma_0^2}\right),$$ then at least one of the users can improve its rate (hence its payoff) to reach one of the faces of the capacity region. We now check the strategy profile on the face
$$ \left\{ (\alpha^i,\alpha^{-i})\ \bigg|\  \alpha^{i}\geq r_{i,\mathcal{N}},\sum_{i=1}^N\alpha^{i}=C_{\mathcal{N}}\right\}.$$
If  $$\beta\in \left\{ (\alpha^i,\alpha^{-i})\ \bigg|\  \alpha_{i}\geq r_{i,\mathcal{N}},\sum_{i=1}^N\alpha^{i}=C_{\Omega}\right\},$$ then
from the Lemma \ref{lembr1}, $\overline{BR}^i(\beta^{-i})=\{\beta^i\}.$ Hence, $\beta$ is a strict equilibrium. Moreover, this strategy $\beta$ is Pareto optimal because  the rate of each user is maximized under the capacity constraint. These strategies are social welfare optimal if the total utility $$\sum_{i=1}^N u^i(\alpha^i,\alpha^{-i})=\sum_{i=1}^N g^i(\alpha^i)$$ is maximized subject to constraints.
\end{proof}

Note that the set of pure Nash equilibria is a convex subset of the capacity region.
The pure equilibria are global optima\footnote{This implies that the price of anarchy is one. } if the function $g$ is the identity function.

\section{Concluding remarks} \label{secconclud}
 In this paper, we have studied an evolutionary  Multiple Access Channel game with a continuum action space and coupled rate constraints. We showed that the  game has a continuum of strong equilibria which  are 100\% efficient in the rate optimization problem.  We proposed the constrained Brown-von Neumann-Nash dynamics, Smith dynamics, and the replicator dynamics  to study the stability of equilibria in the  long run. An interesting question which we leave for future work is
whether similar equilibria structure exist in the case of multiple access games with non-convex capacity regions. Another extension would be to the hybrid model in which users can select among several receivers and control the total rate, which is currently under study.


\begin{thebibliography}{99}
\bibitem{networking}  Altman, E.,  El-Azouzi, R.,  Hayel, Y.,  and  Tembine, H., ``Evolutionary power control games in wireless networks," {\it NETWORKING 2008 Ad Hoc and Sensor Networks, Wireless Networks, Next Generation Internet}, 	Springer Berlin / Heidelberg,  pp. 930-942,  2008.
\bibitem{stro}  Andelman, N., Feldman, M.,  and Mansour, Y.,  ``Strong
price of anarchy,"  {\it SODA},  2007.
\bibitem{psta} Anshelevich, E.,  Dasgupta, A.,  Kleinberg, J.,  Tardos, E.,  Wexler, T. and  Roughgarden, T.,
``The price of stability for network design with fair cost allocation," in {\it Proc. FOCS}, pp. 59-73,  2004.
\bibitem{aumann}Aumann, R.,  ``Acceptable points in general cooperative n-person games", in {\it Contributions to the Theory of Games}, volume 4, 1959.
    \bibitem{pairwise} Gajic, V. and Rimoldi, B.,
    ``Game theoretic considerations for the Gaussian multiple access channel," in {\it Proc.  IEEE ISIT}, 2008.

\bibitem{cor}  Goodman, J. C., ``A note on existence and uniqueness of equilibrium points for concave N-person games," {\it Econometrica}, 48(1),1980, p. 251.
    \bibitem{hofbauer}
    Hofbauer, J. and  Sigmund, K.., {\it Evolutionary Games and Population
Dynamics}, Cambridge University Press, 1998.
\bibitem{hofbauer08}Hofbauer, J.,  Oechssler,  J.,  and  Riedel, F.,
``Brown-von Neumann-Nash dynamics: The continuous strategy case,"  {\it Games and Econ. Behav.}, 65(2):406-429, 2008.
\bibitem{ponstein} Ponstein, J.,
``Existence of equilibrium points in non-product spaces,"
{\it SIAM J. Appl. Math.}, 14(1):181-190, 1966.

 \bibitem{continuous} McGill, B.J.  and Brown,  J.S.,
 ``Evolutionary game theory
and adaptive dynamics of
continuous traits,"  {\it The Annual Rev. of Ecology, Evolution, and
Systematics},  38: 403-435, 2007.

\bibitem{pierre}
Shaiju, A. J.  and Bernhard, P.,
``Evolutionarily robust strategies: two nontrivial examples and a theorem,"
{\it Proc. of ISDG}, 2006.
\bibitem{smith}
Smith, J.M. and  Price, G.M., ``The logic of animal conflict,"
{\it Nature}, 246:15-18, 1973.

\bibitem{rosen}Rosen, J. B., ``Existence and uniqueness of equilibrium points for concave N-person games," {\it Econometrica},  33:520-534, 1965.

\bibitem{sandholmbook} Sandholm, W. H.,  {\it Population Games and Evolutionary Dynamics},  MIT Press, 2009 (to appear ).
\bibitem{corre}
 Takashi, U.,
	``Correlated equilibrium and concave games,"
{\it Int. Journal of Game Theory}, 37(1):1-13, 2008. 	
\bibitem{taylor} Taylor, P.D.  and Jonker, L., ``Evolutionarily stable strategies and game
dynamics," {\it Math. Bioscience}, 40:145-156, 1978.
\bibitem{wiopt}
Tembine, H. , Altman, E. , El-Azouzi, R.  and Hayel, Y., ``Evolutionary games with random number of interacting players applied to access control", {\it Proc. of IEEE/ACM WiOpt}, March 2008.
\bibitem{mass}
Tembine H.,  Altman E. and El-Azouzi R.,  ``Delayed evolutionary game dynamics applied
to the medium access control", {\it In Proc. IEEE MASS}, 2007.
\bibitem{gamecomm} Tembine H., Altman E.,  El-Azouzi R. and Hayel Y.
 ``Multiple access game in ad-hoc networks", {\it In Proc.  GameComm}, 2007.
\bibitem{vincent05}
Vincent, T.L. and Brown, J.S.,   {\it Evolutionary Game Theory, Natural Selection, and
Darwinian Dynamics}, Cambridge Univ. Press, 2005.
\bibitem{ref2} Wei Y. and  Cioffi, J.M. ``Competitive equilibrium in the Gaussian interference channel,"
 {\it  IEEE Internat. Symp. Information Theory (ISIT)}, 2000.
\bibitem{zhu}
Zhu, Q.,  ``A Lagrangian approach to constrained potential games, Part I: theory and example," {\it Proc. IEEE CDC}, Cancun, Mexico, 2008.

\end{thebibliography}
\end{document}